\title{Continuous-Flow Graph Transportation Distances}
\newcommand{\Prob}[0]{\textrm{Prob}}
\newcommand{\W}[0]{\mathcal{W}}
\newcommand{\R}[0]{\mathbb{R}}
\newtheorem{proposition}{Proposition}
\renewenvironment{proof}[1][\proofname]{\par
  \vspace{-\topsep}
  \pushQED{\qed}%
  \normalfont
  \topsep0pt \partopsep-3pt 
  \trivlist
  \item[\hskip\labelsep
        \itshape
    #1\@addpunct{.}]\ignorespaces
}{%
  \popQED\endtrivlist\@endpefalse
  \addvspace{6pt plus 6pt} 
}
\author{
Justin Solomon\thanks{\url{jsolomon@mit.edu}} \\ MIT
\and
Raif Rustamov \\ AT\&T Labs 
\and
Leonidas Guibas \\ Stanford University	
\and
Adrian Butscher \\ Autodesk Research
}
\date{}
\begin{document}

\maketitle

\begin{abstract}
Optimal transportation distances are valuable for comparing and analyzing probability distributions, but larger-scale computational techniques for the theoretically favorable quadratic case are limited to smooth domains or regularized approximations.  Motivated by fluid flow-based transportation on $\mathbb{R}^n$, however, this paper introduces an alternative definition of optimal transportation between distributions over graph vertices.  This new distance still satisfies the triangle inequality but has better scaling and a connection to continuous theories of transportation.  It is constructed by adapting a Riemannian structure over probability distributions to the graph case, providing transportation distances as shortest-paths in probability space.  After defining and analyzing theoretical properties of our new distance, we provide a time discretization as well as experiments verifying its effectiveness.
\end{abstract}

\section{Introduction}

A fundamental problem affecting many application areas is the processing of signals on graphs~\cite{shuman-2013}.  In addition to representing networks, graphs provide an underlying structure for data in which connectivity and proximity affect processing.  Distances as measured along graph edges induce notions of continuity and displacement, allowing for the adaptation of ideas from continuous geometry to sampled or fundamentally discrete data.

In particular, we focus on probability distributions \emph{over} graphs, whereby each vertex of a graph is associated with some amount of probabilistic mass. Given that countless algorithms and models in machine learning rely upon a notion of distance/divergence between probability distributions, it is desirable to design efficiently-computable and well-behaved distances taking into account the connectivity of the underlying graph.

A well-known notion of distance from information theory is provided by the KL divergence and its variants.  The crucial drawback of these divergences in our context, however, is that they are agnostic to graph connectivity. That is, displacing probability from one node to another has the same cost regardless of the distance it travels along the edges.

One alternative taking connectivity into account is provided by \emph{optimal transportation} distances, also known as \emph{earth mover's distances} (EMDs).  These measure the work needed to transform one probability distribution into another by transporting mass over the graph edges~\cite{villani-2003}. Setting the cost of moving a unit of mass between graph nodes $v$ and $w$ to $d(v,w)^p$ yields the $p$-Wasserstein transportation distance, where $d(\cdot,\cdot)$ denotes shortest-path graph distance.

Transportation distances are the topic of a well-developed mathematical theory establishing geometric properties for the $2$-Wasserstein distance desirable for applications. Unfortunately, the basic optimization for this distance involves a linear program scaling quadratically with the number of nodes, rendering the computation prohibitively expensive even for moderately-sized graphs. This has led to approximations using wavelets~\cite{cohen-2011}, smoothing~\cite{cuturi-2013}, sketching~\cite{mcgregor-2013}, and so on. These approximations impose strict requirements on the graph and can become noisy or discontinuous thanks to aggressive approximation.  Others resort to the $1$-Wasserstein distance, which is easier computationally but has weak stability and geometric structure.

Here, we introduce a novel transportation distance between distributions over graphs. Our approach is inspired by a Riemannian framework for Wasserstein distance in $\R^n$ based on fluid flow~\cite{lott-2008}.  We consider the set of distributions on a graph as a manifold and define appropriate analogs of tangent spaces and inner products; our transportation distances are geodesics in this abstract space.  Although it does not come from the usual transportation linear program, our alternative definition still satisfies the triangle inequality and so on while benefiting from the new infinitesimal construction.

Our approach has several advantages. It scales linearly with the number of edges, enabling larger-scale computation especially on sparse graphs. Even with this favorable scaling, it has properties in common with $2$-Wasserstein distances.  The Riemannian structure, not known to accompany the quadratically-scaling linear programming formulation, gives our distances additional properties in common with Wasserstein distances over $\R^n$, such as ``displacement interpolation'' paths explaining the motion of mass from one distribution to another.  We also provide a suitable discretization and experiments suggesting our distance's structure and applicability to learning tasks.

\section{Optimal Transportation on Continuous Domains}

Suppose $M$ is a manifold with geodesic distance $d(\cdot,\cdot).$  We will use $\Prob(M)$ to denote the space of probability measures over $M$. Then, the $p$-\emph{Wasserstein} distance between $\mu_0,\mu_1\in\Prob(M)$ is:
\begin{equation}\label{eq:wasserstein}
\W_p(\mu_0,\mu_1)\equiv\inf_{\pi\in\Pi(\mu_0,\mu_1)}\left(
\iint_{M\times M} d(x,y)^p\,d\pi(x,y)
\right)^{\nicefrac{1}{p}}
\end{equation}
Measures $\pi\in\Pi(\mu_0,\mu_1)\subseteq\Prob(M\times M)$ are \emph{transportation plans} satisfying $\pi(U\times M)=\mu_0(U)$ and $\pi(M\times V)=\mu_1(V)$ for $U,V\subseteq M$.  We can think of $\pi\in\Pi(\mu_0,\mu_1)$ as a matching of probabilistic mass from $\mu_0$ to $\mu_1$; then, $\W_p$ is the minimum cost matching, assuming the cost of moving mass from $x\in M$ to $y\in M$ is $d(x,y)^p$.  We refer the reader to~\cite{villani-2003} for additional discussion.

In vision and learning, $\W_1$ is known as the ``earth mover's distance''~\cite{rubner-2000}, used on histograms and binned descriptors.  Theoretically, however, the quadratic distance $\W_2$ admits stronger smoothness and uniqueness properties akin to the difference between $L_2$ and $L_1$ regularization.  Regardless, algorithms optimizing~\eqref{eq:wasserstein} directly deal with distributions over the cross product $M\times M$; discretizing $M$ using $n$ points produces a linear program over $O(n^2)$ variables, which can be prohibitively expensive.

More recent methods establish a connection between~\eqref{eq:wasserstein} and Eulerian fluid mechanics. Most prominently, Benamou and Brenier show that $\W_2$ can be computed on $M$ as follows~\cite{benamou-2000}:
\begin{equation}\label{eq:bb}
[\W_2(\rho_0,\rho_1)]^2=
\left\{
\begin{aligned}
\inf_{\rho(t,x),v(t,x)} & \int_M\int_0^1 \rho(t,x) \|v(t,x)\|^2\,dx\,dt\\[1ex]
\mbox{such that } \; & \rho(0,x)=\rho_0(x),\
\rho(1,x) = \rho_1(x)\\
& \frac{\partial \rho}{\partial t} + \nabla\cdot(\rho v) = 0,\ \rho(t,x)\geq0
\end{aligned}
\right.
\end{equation}
We assume measures $\mu$ can be expressed using distributions $\rho$ such that $\mu(U)\equiv\int_U \rho(x)\,dx$.  The variable $\rho(t,x)$ is a time-varying set of distributions advecting from $\rho_0$ to $\rho_1$ along vector field $v(t,x)$ with minimal kinetic energy $\iint\rho \|v\|^2 \, dx \, dt$.  Discretizations of this problem scale well, since $[0,1]$ typically can be subdivided into far fewer than $n$ subintervals.  Similar formulations exist for $\W_1$~\cite{santambrogio-2013}, but we focus on $\W_2$ for its structure and computational challenges on graphs.

Using~\eqref{eq:bb}, $\W_2$ can be viewed as a geodesic distance on $\Prob(M)$~\cite{lott-2008}.  First-order optimality of~\eqref{eq:bb} shows $v(t,x)=\nabla \phi(t,x)$ for some $\phi(t,x)$.  This is interpreted to mean that the set of gradients $\{\nabla \phi(\cdot)\}$ forms a ``tangent space" for $\Prob(M)$ at $\mu$ equipped with an inner product defined by
\begin{equation}\label{eq:inner_prod_bb}
\langle \nabla\phi_1,\nabla\phi_2\rangle_{\mu}\equiv\int_M \langle \nabla\phi_1(x),\nabla\phi_2(x)\rangle\,d\mu(x).
\end{equation}
Given a time-varying set of distributions $\rho(t,x)$, there exists a unique potential field $\nabla_x \phi(t,x)$ with $\frac{\partial\rho}{\partial t}+\nabla\cdot(\rho\nabla \phi)=0$, so if we write a ``curve'' $c(t):[0,1]\rightarrow\Prob(M)$ corresponding to $\rho(t,x)$, then we can define $c'(t)\equiv \nabla\phi(t,\cdot).$  Then,~\cite{lott-2008} proves:
\begin{equation}\label{eq:lott}
\W_2(\rho_0,\rho_1) = \inf_{\substack{c(0)=\rho_0\\c(1)=\rho_1}} \int_0^1 \langle c'(t),c'(t)\rangle_\mu^{\nicefrac{1}{2}}\,dt.
\end{equation}
That is, quadratic Wasserstein distance satisfies the geodesic equation under this inner product.

\section{Continuous-Flow Transportation Distances}

Now, we transition from manifold domains $M$ to undirected connected graphs $G=(V,E).$  Recall that our goal is to construct a transportation distance on discrete distributions using the structure of $G$ with the structure of $2$-Wasserstein distances in Euclidean space.

Define $d(v,w)$ to be the shortest-path distance between $v,w\in V$, and define $\Prob(G)\equiv\{p\in[0,1]^{|V|}:\mathbbm1^\top p=1\}.$  To mimic the construction of $\W_2$, we might attempt to compute the following quadratic transportation distance on $\Prob(G)$:
\begin{equation}\label{eq:pairwise}
[\overline\W_{\mathrm{full}}(p_0,p_1)]^2\equiv\left\{
\begin{aligned}
\inf_{T(v,w)\geq0} & \sum_{v,w\in V} d(v,w)^2T(v,w)\\
\mbox{such that } \; 
& \sum_{w\in V} T(v,w)=p_0(v)\ \forall v\in V\\
& \sum_{v\in V} T(v,w)=p_1(w)\ \forall w\in V
\end{aligned}\right.
\end{equation}
While this distance has some properties in common with $\W_2$, the linear program requires $|V|^2$ variables $T(v,w)$ and precomputation of a dense matrix of pairwise distances $d(v,w).$

To alleviate these scaling problems, we might hope to find an Eulerian formulation of $\overline\W_{\mathrm{full}}$ in the style of~\eqref{eq:bb}.  Such a distance is challenging to derive starting from~\eqref{eq:pairwise} for several reasons.  First, graphs are discrete with no perfect analog of the advection equation $\frac{\partial\rho}{\partial t}+\nabla\cdot(\rho v)=0$.  Furthermore, flows on graphs are per-edge and distributions are per-vertex, so the objective $\rho\|v\|^2$ must be approximated.  Hence, rather than expecting to find a new formulation of $\overline\W_{\mathrm{full}},$ we will introduce a \emph{new} transportation distance constructed by working backward from~\eqref{eq:lott}.

\subsection{Definition}

The main ingredients defining transportation distances starting from~\eqref{eq:lott} are:
\begin{enumerate}
\item A model of advection for moving probabilistic mass between adjacent vertices of a graph over time using a per-edge flow.
\item A probability-based inner product on per-edge flows similar to~\eqref{eq:inner_prod_bb}.
\end{enumerate}

To simplify notation, define $\overline E\equiv \{(v\rightarrow w), (w\rightarrow v): (v,w)\in E\},$ a set of directed edges where each edge in $E$ is doubled with forward and backward orientation.  Following~\cite{chapman-2011}, given a nonnegative oriented flow $U(t,e):[0,1]\times\overline E\rightarrow \R^+$ and distribution $p_0\in\Prob(G)$, we define advection of $p_0$ along $U$ as the solution of the system of ordinary differential equations
\begin{align}\label{eq:graph_advection}
\frac{d}{dt}p(t,v) &= \sum_{e=(w\rightarrow v)} U(t,e) p(t,w) - \sum_{e=(v\rightarrow w)} U(t,e) p(t,v)\,,\\
p(0,v) &= p_0(v)\,.\nonumber
\end{align}
The time variable $t$ aside, this model of advection is defined discretely in terms of the structure of the graph but still preserves mass in that $\sum_{v\in V} p(t,v)=1$ for all $t\geq0$.

To define an inner product on flows, take $U,W:\overline E\rightarrow\R$ and $p\in\Prob(G)$ with $p(v)>0\ \forall v\in V$.  Then, we define the \emph{advective inner product} between $U$ and $W$ at $p$ as:
\begin{equation}\label{eq:advective_inner_prod}
\langle U,W\rangle_p\equiv\sum_{e=(v\rightarrow w)} \left(\frac{p(v)}{p(w)}\cdot\frac{(p(v)+p(w))}{2}\right) U(e)W(e)\,.
\end{equation}
This is an $L_2$ inner product on $\R^{2|E|}$ weighted by a function of the distribution $p$.  The asymmetry between $p(v)$ and $p(w)$ accounts for the structure of graph advection~\eqref{eq:graph_advection} and makes our transportation distance symmetric via Proposition~\ref{prop:momentum}.  We also define the \emph{advective norm} $\|U\|_p\equiv\sqrt{\langle U,U\rangle_p}.$

Now, we can define the \emph{continuous-flow transportation distance} on $\Prob(G)$ by imitating~\eqref{eq:lott}:
\begin{empheq}[box=\fbox]{equation}\label{eq:graph_geodesic}
\overline\W(p_0,p_1)\equiv\left\{
\begin{aligned}
\inf_{\substack{U(t,e)\geq0\\p(t,v)\geq0}} & \int_0^1 \|U(t,\cdot)\|_{p(t,\cdot)}\,dt\\
\mbox{such that }\;  & p(0,v)=p_0(v), \, p(1,v)=p_1(v)\\
&\text{(\ref{eq:graph_advection}) holds}
\end{aligned}
\right.
\end{empheq}

\subsection{Properties}

We state several properties of our new distance $\overline\W$ to provide intuition for its behavior and to motivate how we will compute it numerically.  We begin by providing an alternative convex formulation:

\begin{proposition} $\overline\W$ can be computed as follows:\label{prop:momentum}
\begin{equation}\label{eq:momentum_form}
[\overline\W(p_0,p_1)]^2=\left\{
\begin{aligned}
\inf_{\substack{J(t,e)\geq0\\p(t,v)\geq0}} & \int_0^1\sum_{e=(v\rightarrow w)} \frac{J(t,e)^2}{2}\left(\frac{1}{p(t,v)} + \frac{1}{p(t,w)}\right)\,dt\\
\mbox{such that } \; & p(0,v)=p_0(v), \, p(1,v)=p_1(v)\ \forall v\in V\\
& \frac{d}{dt} p(t,\cdot) = D^\top J(t,\cdot)
\end{aligned}
\right.
\end{equation}
Here, $D\in\R^{2|E|\times |V|}$ is the operator computing $p(w)-p(v)$ for each oriented edge $(v\rightarrow w)\in \overline E.$
\end{proposition}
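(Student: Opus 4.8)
The plan is to prove the identity in two moves: a change of variables that replaces the oriented flow $U$ by the mass flux it induces, and the classical identification of minimal path length with minimal path energy. For the first move, fix a pair $(p,U)$ feasible for~\eqref{eq:graph_geodesic} with $p(t,v)>0$ for all $t$ and $v$ (this is already forced by the domain of~\eqref{eq:advective_inner_prod}; the degenerate case is discussed below), and set $J(t,e)\equiv U(t,e)\,p(t,v)$ for every directed edge $e=(v\rightarrow w)$. Then $J\geq0$, and substituting $U(t,e)p(t,v)=J(t,e)$ into~\eqref{eq:graph_advection} turns the advection equation into $\frac{d}{dt}p(t,v)=\sum_{e=(w\rightarrow v)}J(t,e)-\sum_{e=(v\rightarrow w)}J(t,e)=(D^\top J(t,\cdot))(v)$, which is exactly the constraint in~\eqref{eq:momentum_form}. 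Substituting $U(t,e)=J(t,e)/p(t,v)$ into~\eqref{eq:advective_inner_prod} gives
\begin{align*}
\|U(t,\cdot)\|_{p(t,\cdot)}^2 &=\sum_{e=(v\rightarrow w)}\frac{p(t,v)}{p(t,w)}\cdot\frac{p(t,v)+p(t,w)}{2}\cdot\frac{J(t,e)^2}{p(t,v)^2}\\
&=\sum_{e=(v\rightarrow w)}\frac{J(t,e)^2}{2}\left(\frac{1}{p(t,v)}+\frac{1}{p(t,w)}\right),
\end{align*}
so the integrand of~\eqref{eq:momentum_form} coincides with $\|U(t,\cdot)\|_{p(t,\cdot)}^2$ --- this is precisely the step that uses the asymmetric $p(v)/p(w)$ weight in~\eqref{eq:advective_inner_prod}. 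Since $(p,U)\mapsto(p,J)$ is invertible on pairs with $p>0$ (inverse $U(t,e)=J(t,e)/p(t,v)$), it is a bijection between the feasible pairs of~\eqref{eq:graph_geodesic} and of~\eqref{eq:momentum_form} that carries $\int_0^1\|U(t,\cdot)\|_{p(t,\cdot)}^2\,dt$ onto the objective of~\eqref{eq:momentum_form}. Hence the right-hand side of~\eqref{eq:momentum_form} equals $\inf\int_0^1\|U(t,\cdot)\|_{p(t,\cdot)}^2\,dt$ over pairs feasible for~\eqref{eq:graph_geodesic}.

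For the second move I would show $\big(\inf\int_0^1\|U\|_p\,dt\big)^2=\inf\int_0^1\|U\|_p^2\,dt$. The bound ``$\leq$'' is Cauchy--Schwarz, $\big(\int_0^1\|U\|_p\,dt\big)^2\leq\int_0^1\|U\|_p^2\,dt$, applied termwise in $(p,U)$ and then passing to infima. For ``$\geq$'', take a feasible $(p,U)$ of total length $L=\int_0^1\|U(t,\cdot)\|_{p(t,\cdot)}\,dt$ (if $L=0$ then $p_0=p_1$ and both sides vanish), reparametrize time by normalized arc length $s(t)=\tfrac1L\int_0^t\|U(\tau,\cdot)\|_{p(\tau,\cdot)}\,d\tau$, and rescale $U$ by $dt/ds$; because the right-hand side of~\eqref{eq:graph_advection} is linear in $U$, the reparametrized pair is again feasible with the same endpoints, has the same length, and has constant advective norm $L$, so its energy is $L^2=\big(\int_0^1\|U\|_p\,dt\big)^2$. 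Taking the infimum over all feasible $(p,U)$ gives ``$\geq$'', and combining with the first move yields $[\overline\W(p_0,p_1)]^2=\text{RHS of }\eqref{eq:momentum_form}$.

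I expect the main obstacle to be the degenerate case in which some $p(t,v)$ vanishes, where neither~\eqref{eq:advective_inner_prod} nor the integrand of~\eqref{eq:momentum_form} is literally defined; the standard fix is to read $\tfrac12 J(t,e)^2(\tfrac1{p(t,v)}+\tfrac1{p(t,w)})$ as its lower-semicontinuous convex extension to $\{p\geq0\}$ (equal to $+\infty$ when a denominator vanishes while $J(t,e)\neq0$, and to $0$ when both vanish), to restrict~\eqref{eq:graph_geodesic} correspondingly to pairs of finite objective so the bijection still applies, and to handle endpoints on the boundary of $\Prob(G)$ by a continuity/approximation argument perturbing $p_0,p_1$ into the interior. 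A minor technicality is that the arc-length reparametrization is only strictly monotone where $\|U\|_p>0$; but on an interval where $\|U\|_p\equiv0$ the strict positivity of the weights in~\eqref{eq:advective_inner_prod} (for $p>0$) forces $U\equiv0$ and hence $p$ constant, so such rest intervals can be excised before renormalizing. Finally, since~\eqref{eq:momentum_form} is manifestly invariant under the time reversal $t\mapsto1-t$ together with $J(t,v\rightarrow w)\mapsto J(1-t,w\rightarrow v)$, the formula immediately yields the symmetry $\overline\W(p_0,p_1)=\overline\W(p_1,p_0)$ asserted earlier.
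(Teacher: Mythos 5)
Your proposal is correct and follows essentially the same route as the paper's proof: the substitution $J(t,e)=p(t,v)U(t,e)$ converting the advective norm and advection constraint into momentum form, followed by the standard length-versus-energy equivalence via Cauchy--Schwarz/H\"older in one direction and constant-speed (arc-length) reparameterization in the other. Your additional care with the degenerate cases ($p=0$, rest intervals, infima not attained) tightens points the paper leaves implicit, but the argument is the same.
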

\begin{proof}
First, substitute $J(t,e)\equiv p(t,v)U(t,e)$ to~\eqref{eq:graph_geodesic}, for $e=(v\rightarrow w).$  Then,
$$\langle U,U\rangle_p
=\sum_{e=(v\rightarrow w)\in\overline E} \frac{J(t,e)^2}{2}\left(\frac{1}{p(t,v)} + \frac{1}{p(t,w)}\right)\,.$$
An identical substitution into the advection equation shows that $J$ satisfies $\frac{dp}{dt}=D^\top J.$

Define a new inner product $\langle\cdot,\cdot\rangle_p'$ replacing the weights in~\eqref{eq:advective_inner_prod} with $\frac{1}{p(v)}+\frac{1}{p(w)}$. 
In this notation, we must show that minimizing~\eqref{eq:momentum_form}, whose energy functional now can be written as $\int_0^1 \|J(t,\cdot)\|_{p(t,\cdot)}'^2\,dt$, is equivalent to squaring the result of minimizing the non-squared functional $\int_0^1 \|J(t,\cdot)\|_{p(t,\cdot)}'\,dt$ with the same constraints.  We follow an analogous proof for manifold geodesics in Theorem 2.1 of~\cite{udriste-1994}.

First, take any $J$ and $p$ satisfying the constraints.  As a consequence of H\"older's inequality, we know
$$\int_0^1 \|J(t,\cdot)\|_{p(t,\cdot)}'\,dt\leq\left(\int_0^1 \|J(t,\cdot)\|_{p(t,\cdot)}'^2\,dt\right)^{\nicefrac{1}{2}}.$$
This establishes~\eqref{eq:momentum_form} as an upper bound for~\eqref{eq:graph_geodesic}.

Now, suppose $(J,p)$ are minimizers of~\eqref{eq:graph_geodesic} after substitution.  Define the ``arc length'' function
$\ell(t)\equiv \int_0^t \|J(\bar t,\cdot)\|_{p(\bar t,\cdot)}'\,d\bar t,$
and take $s(t)\equiv\nicefrac{\ell(t)}{\ell(1)}.$  
Reparameterizing $p$ with respect to $s$ yields
$$\frac{d}{ds}p(s,\cdot)=\frac{dp}{dt}\frac{dt}{ds}=D^\top J(s,\cdot)\frac{dt}{ds}=D^\top\left(\frac{J(s,\cdot)\ell(1)}{\|J(s,\cdot)\|_p}\right)\equiv D^\top\tilde J(s,\cdot).$$
Hence, $\tilde J(s,\cdot)$ and $p(s,\cdot)$ satisfy the constraints with $t\mapsto s$.  The advective norm of $\tilde J$, however, is a constant function of $s$ with $\|\tilde J(s,\cdot)\|'_p = \ell(1) = \W(p_0,p_1).$   Thus we know $\overline\W(p_0,p_1)^2=\int_0^1 \|\tilde J(s,\cdot)\|'^2_p\,ds,$ establishing~\eqref{eq:graph_geodesic} as an upper bound for~\eqref{eq:momentum_form}.\end{proof}

With this alternative formulation in hand, we verify that $\overline\W$ truly satisfies the properties of a distance metric on $\Prob(G)$:
\begin{proposition} $\overline\W$ is a distance on $\Prob(G)$.
\end{proposition}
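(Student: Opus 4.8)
The plan is to verify the standard metric axioms for $\overline\W$ on $\Prob(G)$, using the convex formulation of Proposition~\ref{prop:momentum} wherever it simplifies matters. Nonnegativity is immediate from the definition, since the integrand $\|U(t,\cdot)\|_{p(t,\cdot)}$ is a norm and hence nonnegative. Symmetry $\overline\W(p_0,p_1)=\overline\W(p_1,p_0)$ follows by time-reversal: given an admissible pair $(U,p)$ from $p_0$ to $p_1$, I would set $\tilde p(t,\cdot)\equiv p(1-t,\cdot)$ and construct a reversed flow $\tilde U$ so that advection~\eqref{eq:graph_advection} runs backward; the asymmetric weight $\tfrac{p(v)}{p(w)}\cdot\tfrac{p(v)+p(w)}{2}$ in~\eqref{eq:advective_inner_prod} is exactly what makes the reversed flow have the same advective length --- this is the point flagged in the text. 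In the momentum variables $J$ this is cleaner: reversing time sends $\frac{dp}{dt}=D^\top J$ to $-\frac{dp}{dt} = D^\top(-J)$, and $-J$ for the reversed orientation is again a nonnegative flow, with the energy $\tfrac{J^2}{2}(\tfrac1{p(v)}+\tfrac1{p(w)})$ manifestly symmetric in $v,w$; so the two optimization problems have the same value.

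Next I would establish the triangle inequality $\overline\W(p_0,p_2)\le\overline\W(p_0,p_1)+\overline\W(p_1,p_2)$ by the usual path-concatenation argument. Given near-optimal flows $(U^{(1)},p^{(1)})$ from $p_0$ to $p_1$ on $[0,1]$ and $(U^{(2)},p^{(2)})$ from $p_1$ to $p_2$ on $[0,1]$, I reparametrize each to run on half-intervals and glue them at $t=\tfrac12$; since the objective $\int_0^1\|U(t,\cdot)\|_{p(t,\cdot)}\,dt$ is parametrization-invariant (as used already in the proof of Proposition~\ref{prop:momentum} via the arc-length reparametrization), the concatenated path is admissible from $p_0$ to $p_2$ with length equal to the sum of the two lengths. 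Taking infima gives the inequality. One should remark that the concatenated $p$ is continuous at the join (both pieces equal $p_1$ there) and piecewise-$C^1$, which is enough to satisfy~\eqref{eq:graph_advection} in the required sense.

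The main obstacle is the identity-of-indiscernibles axiom: $\overline\W(p_0,p_1)=0 \iff p_0=p_1$. The $\Leftarrow$ direction is trivial (take $U\equiv0$). For $\Rightarrow$, suppose $\overline\W(p_0,p_1)=0$; I want to conclude $p_0=p_1$. The subtlety is the nonnegativity constraint $p(t,v)\ge0$, which means the weights in~\eqref{eq:advective_inner_prod} or~\eqref{eq:momentum_form} can blow up or degenerate when some $p(t,v)\to0$, so I cannot simply say ``zero-length path means constant path.'' The cleanest route is to work in the momentum formulation~\eqref{eq:momentum_form} and argue that $\overline\W(p_0,p_1)=0$ forces, along a minimizing (or minimizing-in-the-limit) sequence, $\int_0^1\sum_e \tfrac{J(t,e)^2}{2}(\tfrac1{p(t,v)}+\tfrac1{p(t,w)})\,dt\to0$; since each summand is nonnegative and $\tfrac1{p(t,v)}+\tfrac1{p(t,w)}\ge 1/\max_v p(t,v) \ge 1$ (as $p(t,\cdot)$ is a probability vector, some entry is $\ge 1/|V|$, but more simply every entry is $\le 1$ so each reciprocal is $\ge1$), we get $\int_0^1\sum_e J(t,e)^2\,dt\to0$. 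Then from $\frac{dp}{dt}=D^\top J$ we bound $\|p_1-p_0\|_1=\|\int_0^1 D^\top J(t,\cdot)\,dt\|_1 \le C\int_0^1\|J(t,\cdot)\|_1\,dt \le C'(\int_0^1\|J(t,\cdot)\|_2^2\,dt)^{1/2}\to0$ by Cauchy--Schwarz in $t$ and norm equivalence on the finite-dimensional space $\R^{2|E|}$, where $C$ depends only on $D$. Hence $p_0=p_1$. I would also note that, combined with Proposition~\ref{prop:momentum} and the strict convexity/lower-semicontinuity of the energy, an actual minimizer exists when $p_0,p_1$ have full support, but for the metric axioms the limiting argument above suffices and sidesteps the delicate question of whether the infimum in~\eqref{eq:graph_geodesic} is attained when mass vanishes at some vertex.
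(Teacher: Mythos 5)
Your treatment of nonnegativity, symmetry, and the triangle inequality matches the paper's proof: the paper likewise gets symmetry from the momentum form~\eqref{eq:momentum_form} by placing $J$ on the flipped edges of $\overline E$ (exploiting that the weight $\tfrac{1}{p(v)}+\tfrac{1}{p(w)}$ is symmetric in $v,w$), and gets the triangle inequality by the standard concatenation-of-paths argument for geodesic distances. The substantive difference is your final paragraph: the paper's proof stops after those three axioms and never verifies identity of indiscernibles, so strictly speaking it only establishes that $\overline\W$ is a pseudometric. Your argument closes that gap, and it is correct: since every entry of a probability vector is at most $1$, the weights $\tfrac{1}{p(t,v)}+\tfrac{1}{p(t,w)}$ are bounded below by a positive constant independent of $p$, so a vanishing energy forces $\int_0^1\|J(t,\cdot)\|_2^2\,dt\to0$ along a minimizing sequence, and then $p_1-p_0=\int_0^1 D^\top J\,dt$ together with Cauchy--Schwarz and finite-dimensional norm equivalence gives $p_0=p_1$. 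Working with a minimizing sequence rather than a minimizer is the right move, since attainment of the infimum when mass vanishes at some vertex is exactly the delicate point the paper does not address. In short: same route where the proofs overlap, plus a needed extra step the paper omits.
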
\begin{proof}
Non-negativity follows directly from the form of $\overline\W$.  Symmetry follows from~\eqref{eq:momentum_form} since $\overline E$ contains all forward \emph{and} backward edges; any forward-flowing $p(t,v)$ can be replaced by a backward flow $p(1-t,v)$ with the same integrated advective norm by placing $J$ on flipped edges.  Finally, the triangle inequality follows from~\eqref{eq:graph_geodesic} identically to the proof of triangle inequality for geodesic curves by concatenating paths in $\Prob(G)$.
\end{proof}

Finally, we relate $\overline\W$ to the geometry of the underlying graph.  For simplicity, we restrict our theoretical result to a weak but easily-proved bound and show experimentally in \S\ref{sec:experiments} that our bound is conservative:

\begin{proposition} Suppose $\delta_v,\delta_w\in\Prob(G)$ are the indicator functions of $v,w\in V$.  Then, there exists $c\geq0$ not dependent on $G$ with $\overline\W(\delta_v,\delta_w)\leq cd(v,w).$\label{prop:bound}
\end{proposition}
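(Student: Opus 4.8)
The plan is to combine the triangle inequality with an explicit low-cost path between $\delta_v$ and $\delta_w$ that lives on a single shortest path of $G$. Let $v=x_0,x_1,\dots,x_k=w$ be a shortest path, so $k=d(v,w)$ (taking unit edge lengths, as in the rest of the construction). By the triangle inequality from the preceding proposition, $\overline\W(\delta_v,\delta_w)\le\sum_{i=1}^k\overline\W(\delta_{x_{i-1}},\delta_{x_i})$, so it suffices to produce a universal constant $c$ with $\overline\W(\delta_a,\delta_b)\le c$ for every edge $(a,b)\in E$; such a $c$ depends on neither $v$, $w$, nor $G$, and summing over the path gives $\overline\W(\delta_v,\delta_w)\le ck=c\,d(v,w)$.

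For a fixed edge $(a,b)$ I would work in the momentum formulation of Proposition~\ref{prop:momentum} and simply guess the path: pick any smooth nondecreasing $f\colon[0,1]\to[0,1]$ with $f(0)=0$, $f(1)=1$, and set $p(t,a)=1-f(t)$, $p(t,b)=f(t)$, $p(t,u)=0$ for $u\notin\{a,b\}$, together with the flow $J(t,a\!\to\!b)=f'(t)$ and $J(t,e)=0$ on every other oriented edge. Feasibility for~\eqref{eq:momentum_form} is immediate: nonnegativity and the boundary conditions are clear, and $\tfrac{d}{dt}p(t,\cdot)=D^\top J(t,\cdot)$ holds edge-by-edge because the only nonzero flow carries mass from $a$ to $b$. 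On edges not incident to $a$ or $b$ the mass vanishes, but so does $J$, so those terms of the energy contribute nothing under the usual $0\cdot\infty=0$ convention and never enter the infimum.

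The key computation is the length of this path. Since $\tfrac{1}{1-f}+\tfrac{1}{f}=\tfrac{1}{f(1-f)}$ and $f'\ge0$, the advective norm of Proposition~\ref{prop:momentum} is $\|J(t,\cdot)\|'_{p(t,\cdot)}=f'(t)\big/\sqrt{2\,f(t)\,(1-f(t))}$, and the substitution $u=f(t)$ gives
\begin{equation*}
\int_0^1\|J(t,\cdot)\|'_{p(t,\cdot)}\,dt=\frac{1}{\sqrt2}\int_0^1\frac{du}{\sqrt{u(1-u)}}=\frac{\pi}{\sqrt2},
\end{equation*}
independent of the choice of $f$ and of $G$. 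Because Proposition~\ref{prop:momentum} (together with its arc-length reparametrization argument) identifies $[\overline\W(\delta_a,\delta_b)]^2$ with the infimum of $\big(\int_0^1\|J\|'_p\,dt\big)^2$ over feasible pairs, this path witnesses $\overline\W(\delta_a,\delta_b)\le\pi/\sqrt2$, so $c=\pi/\sqrt2$ works.

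I expect the only real subtlety — and the reason the statement is billed as a weak bound — to be the singular weights $1/p$ near $t=0,1$: the naive linear choice $f(t)=t$ has infinite kinetic energy, so one must notice that the \emph{length} integral $\int_0^1\tfrac{du}{\sqrt{u(1-u)}}=B(\tfrac12,\tfrac12)=\pi$ still converges and then use the reparametrization in the proof of Proposition~\ref{prop:momentum} to pass back to a finite squared distance. Everything else is bookkeeping; no property of the ambient graph beyond the existence of the shortest path is used, which is exactly why $c$ is universal (and also why this bound is far from tight — a genuinely minimizing path would spread mass over many edges).
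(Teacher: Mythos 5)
Your proof is correct, and it follows the same broad strategy as the paper---reduce to the cost of moving mass across a single edge, then chain $d(v,w)$ copies of that one-edge solution along a shortest path---but the two halves are each executed differently. For the chaining step, the paper does not invoke the triangle inequality; it explicitly concatenates $\ell=d(v,w)$ time-compressed copies of the one-edge path into $[0,1]$ and tracks how the squared functional in~\eqref{eq:momentum_form} scales under compression (a factor of $\ell$ per subinterval, times $\ell$ subintervals, giving $c^2\ell^2$). Your appeal to the already-established triangle inequality is cleaner and equally valid, since that proposition precedes this one. For the one-edge step, the paper simply \emph{defines} $c\equiv\overline\W(\delta_a,\delta_b)$ on the two-node graph and reports a numerical estimate $c\approx 2.2159$ from the discretization, whereas you exhibit an explicit feasible path and compute its length in closed form, $\int_0^1 f'/\sqrt{2f(1-f)}\,dt=\pi/\sqrt{2}\approx 2.2214$. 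That is a genuine improvement in self-containedness: you get an analytic constant, and its near-agreement with the paper's numerical value is a good consistency check (the small discrepancy is discretization error in $\overline\W_k$; since every monotone one-edge path has the same length by your change of variables, $\pi/\sqrt2$ is in fact the exact two-node distance). Your handling of the two delicate points---the $0\cdot\infty$ convention at vertices carrying no mass, and the need to pass through the length functional and reparametrize by arc length (or equivalently to pick $f(t)=\sin^2(\pi t/2)$ directly) so that the \emph{squared} energy is finite despite the singular weights at $t=0,1$---is correct and is no more of an issue for your construction than for the paper's own feasible point in~\eqref{eq:momentum_form}.
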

\begin{proof}
For a two-node graph with $V=\{a,b\}$ and $E=\{(a,b)\}$, define $c\equiv\overline\W(\delta_a,\delta_b)$; using the discretization in \S\ref{sec:discretization} with a large number of time steps, $c\approx 2.2159$.  Then, for $v,w\in V$ in a general graph $G=(V,W)$, define $\ell\equiv d(v,w)$.  We construct a time-varying set of distributions $p(t,v)$ by dividing $[0,1]$ into $\ell$ subintervals of length $\nicefrac{1}{\ell}$ and use two-node solution to transport mass along each edge of the shortest path from $v$ to $w$ one-at-a-time.  Compressing the two-node integral into an interval of length $\nicefrac{1}{\ell}$ scales the integral for $\overline\W^2$ by $\ell$, and there are $\ell$ such subintervals.  Thus, this feasible point for~\eqref{eq:momentum_form} shows $[\overline\W(\delta_v,\delta_w)]^2\leq c^2\ell^2\implies \overline\W(\delta_v,\delta_w)\leq c\ell.$
\end{proof}

\subsection{Discretization}\label{sec:discretization}

While $\overline\W$ is a distance on $\Prob(G)$, a subset of the finite-dimensional set $[0,1]^{|V|}$, its computation involves solving variational problems~\eqref{eq:graph_geodesic} or~\eqref{eq:momentum_form} containing a continuous time variable $t$.   For this reason, we propose a discrete approximation of $\overline\W$ by sampling $t$ before the minimization.  This approximation can be evaluated using standard convex optimization machinery.

Since the functional $\nicefrac{x^2}{y}$ is convex so long as $y\geq0$, we use~\eqref{eq:momentum_form} as the starting point for our discretization; this objective term can be included in second-order cone programs~\cite{grant-2008,cvx-2014}.  We divide the time span $[0,1]$ into $k$ subintervals, with improving approximation quality as $k\rightarrow\infty$.  Similar to~\cite{papadakis-2014}, we propose using a \emph{staggered} discretization in which $J$ is sampled at subinterval midpoints and $p$ is sampled at subinterval endpoints; for this reason, our optimization variables are distributions $q_0,\ldots,q_k\in\Prob(G)$ with flows $J_1,\ldots,J_k$ so that $J_i$ is the flow from $q_{i-1}$ to $q_i$.

Our discretized version of the optimization~\eqref{eq:momentum_form} is: 
\begin{equation}\label{eq:discretization}
[\overline\W_k(p_0,p_1)]^2\equiv
\left\{
\begin{aligned}
\min_{\substack{J^i(e)\geq0\\q^i(v)\geq0}}\;& k\sum_{i=1}^k \sum_{e=(v\rightarrow w)} \frac{J^i(e)^2}{2}\left(\frac{1}{q^{i-1}(v)}+\frac{1}{q^i(w)}\right)\,,\\
\mbox{such that } \; & q^0 = p_0, \, q^k = p_1, \,
 D^\top J^i = q^i - q^{i-1}\,.
\end{aligned}
\right.
\end{equation}
This discretization is convex. Taking $J=1$ along the shortest path from $v$ to $w$, one can show $\overline\W_k(\delta_v,\delta_w)=k$ when $d(v,w)=k,$ suggesting that the bound in Proposition~\ref{prop:bound} is conservative.

One feature of our discretization is the mix of $q^{i-1}(v)$ and $q^i(w)$ in the optimization objective.  This construction ensures that when $J^i(e)>0$, both denominators are nonzero since mass is transported from vertex $v$ at time $i-1$ to vertex $w$ at time $i$.  This slight asymmetry allows the boundary distributions to contain zero values and vanishes for large $k$.  If symmetry is desired, the forward and backward discretizations can be averaged.

This optimization problem has $O(k(|V|+|E|))$ variables.  One advantage of such a formulation over the $O(|V|^2)$ linear programming approach is that additional edge-based pruning strategies can be applied to reducing the number of variables when we are confident certain edges will not figure into the computation.  For instance, we can first compute the one-Wasserstein distance using network flow techniques and prune those edges that are not used by the resulting flow.  This enhances efficiency when distributions are concentrated in the same region of the graph or have low entropy.

\section{Experiments}\label{sec:experiments}


\subsection{Convergence}

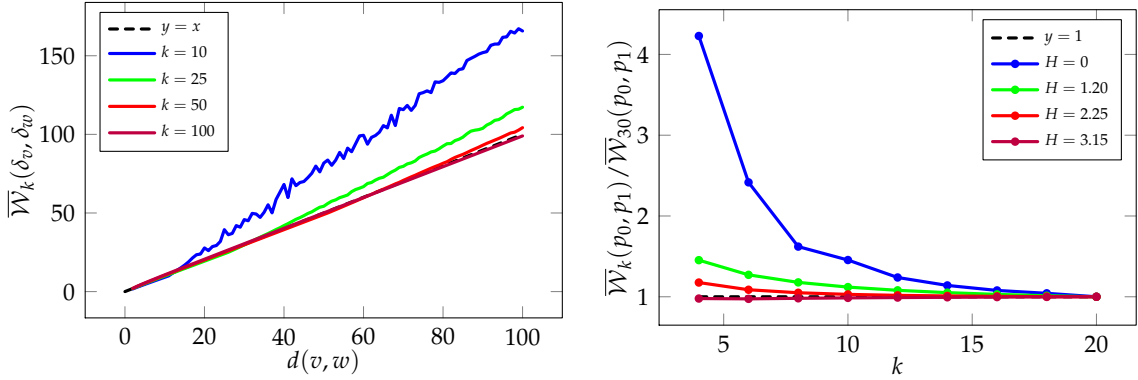
\begin{figure}[t]\centering
\begin{tabular}{cc}
\begin{tikzpicture}
\begin{axis}[
xlabel={\footnotesize $d(v,w)$},
ylabel={\footnotesize $\overline\W_k(\delta_v,\delta_w)$},
legend entries={$y=x$,$k=10$,$k=25$,$k=50$,$k=100$},
legend style={cells={anchor=west},legend pos=north west,font=\tiny},
width=.48\columnwidth,
height=.35\columnwidth,
every axis/.append style={font=\footnotesize},
xlabel near ticks,
ylabel near ticks,
line cap = round,
line join = round,
xlabel shift = -.08in,
ylabel shift = -.08in
]
\addplot[black,samples=300,very thick,dashed,domain=0:100]{x};
\addplot[mark=none,color=blue,very thick] table {figures/deltaconvergence/k10.dat};
\addplot[mark=none,color=green,very thick] table {figures/deltaconvergence/k25.dat};
\addplot[mark=none,color=red,very thick] table {figures/deltaconvergence/k50.dat};
\addplot[mark=none,color=purple,very thick] table {figures/deltaconvergence/k100.dat};
\end{axis}
\end{tikzpicture}
&
\begin{tikzpicture}
\tikzset{every mark/.append style={scale=.5}}
\begin{axis}[
xlabel={\footnotesize $k$},
ylabel={\footnotesize $\overline\W_k(p_0,p_1)/ \overline\W_{30}(p_0,p_1)$},
legend entries={$y=1$,$H=0$,$H=1.20$,$H=2.25$,$H=3.15$},
legend style={cells={anchor=west},legend pos=north east,font=\tiny},
width=.48\columnwidth,
height=.35\columnwidth,
every axis/.append style={font=\footnotesize},
xlabel near ticks,
ylabel near ticks,
line cap = round,
line join = round,
xlabel shift = -.08in,
ylabel shift = -.08in
]
\addplot[black,samples=300,very thick,dashed,domain=4:20]{1};
\addplot[mark=*,color=blue,very thick] table {figures/entropyconvergence/entropy0.dat};
\addplot[mark=*,color=green,very thick] table {figures/entropyconvergence/entropy1.11986.dat};
\addplot[mark=*,color=red,very thick] table {figures/entropyconvergence/entropy2.25171.dat};
\addplot[mark=*,color=purple,very thick] table {figures/entropyconvergence/entropy3.15293.dat};
\end{axis}
\end{tikzpicture}
\end{tabular}\vspace{-.15in}
\caption{Convergence of the discrete-time approximation.  (left) Graph distance between nodes $v$ and $w$ versus distance between indicator functions $\delta_v$ and $\delta_w$; (right) convergence of the approximation $\overline\W_k$ for different values of the average entropy $H$ between $p_0$ and $p_1$.}\label{fig:convergence}
\end{figure}

Figure~\ref{fig:convergence} illustrates the convergence of our approximation $\overline\W_k$ for increasing $k$.  
On the left, we measure $\overline\W_k$ between the indicators of two vertices $v,w\in V$ as in Proposition~\ref{prop:bound}.  We plot the relationship between graph distance and $\overline\W_k$ for various $k$.  As $k$ increases, $\overline\W_k(\delta_v,\delta_w)$ better approximates $d(v,w)$.  This suggests a commonality with traditional transportation distances, which satisfy the diagonal relationship exactly and shows that our bound in Proposition~\ref{prop:bound} likely could be tightened.
On the right, we plot $\overline\W_k$ as $k$ increases for assorted pairs $p_0,p_1\in\Prob(G)$.  In this experiment, we measure distances after perturbing the indicators of the two farthest vertices of a thirty-vertex line graph with uniform noise. We plot convergence for distances for different levels of entropy in $p_0$ and $p_1$; our approximation succeeds even with low $k$ when the entropies of the distributions is high.  This observation reflects the fact that transportation between high-entropy distributions is less likely to move mass large distances, since mass exists all over $G$.

\subsection{Displacement Interpolation}\label{sec:displacement_interpolation}

One well-understood phenomenon associated with the continuous two-Wasserstein distance $\W_2$ is \emph{displacement interpolation}, introduced by McCann in~\cite{mccann-1997}.  Displacement interpolation can be thought of as the characterization of the ``path'' of distributions $\rho(t,\cdot)$ from $\rho_0$ to $\rho_1$ suggested in the flow-based formulation~\eqref{eq:bb}.  This path is well-understood and isotropic in the $\W_2$ case, whereas for $\W_1$ the corresponding path is the trivial construction $\rho(t,x)=(1-t)\rho_0(x)+t\rho_1(x)$~\cite{villani-2003}.

Displacement interpolation largely is understood as a continuous phenomenon on connected rather than discrete domains.  Even in the quadratic case~\eqref{eq:pairwise}, there does not appear to be a well-characterized flow of probability over time explaining the optimal matching $T(\cdot,\cdot).$  Contrastingly, our distance $\overline\W$ exhibits displacement interpolation phenomena by construction.

\begin{figure}
\begin{tabular}{r@{}c@{}c@{}c@{}c@{}c}
\rotatebox{90}{\tiny\hspace{.09in}Trivial}&
\includegraphics[width=0.2\textwidth]{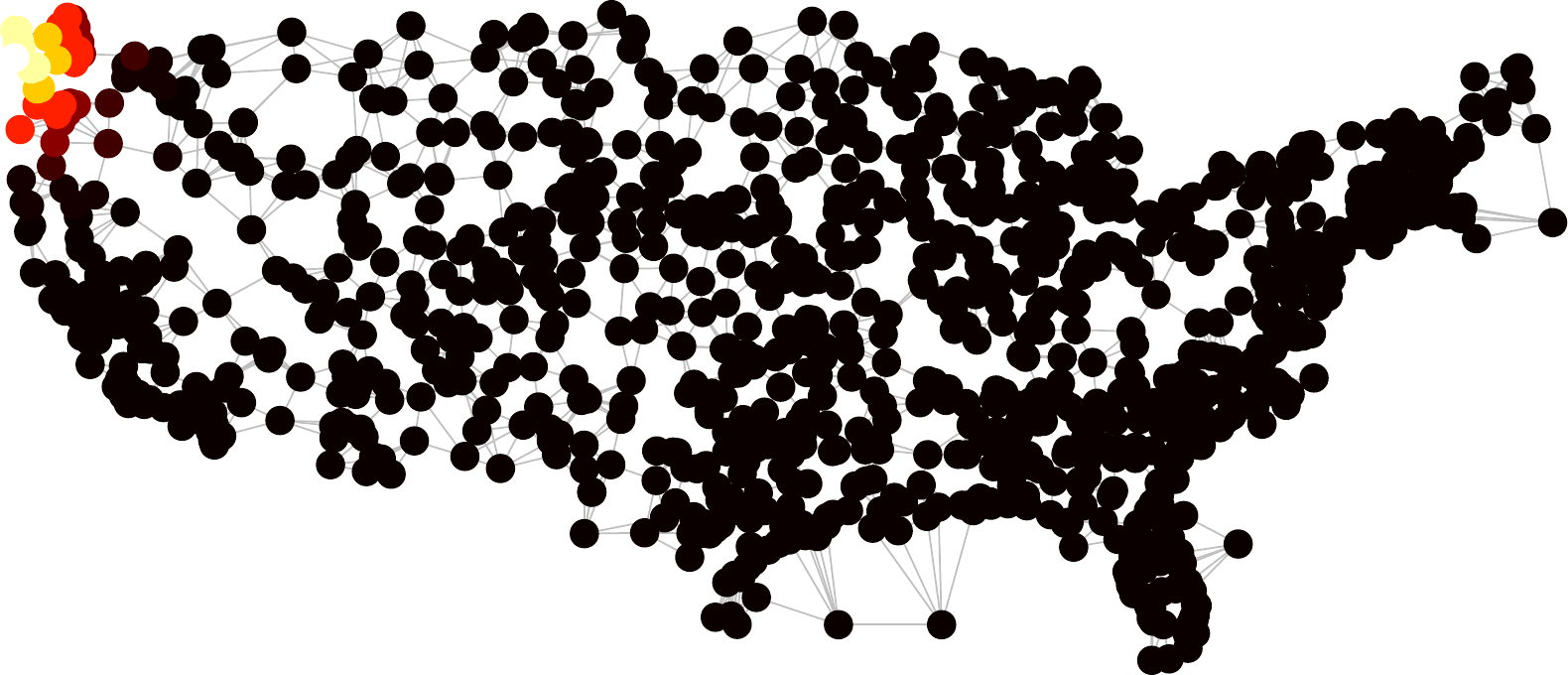}&
\includegraphics[width=0.2\textwidth]{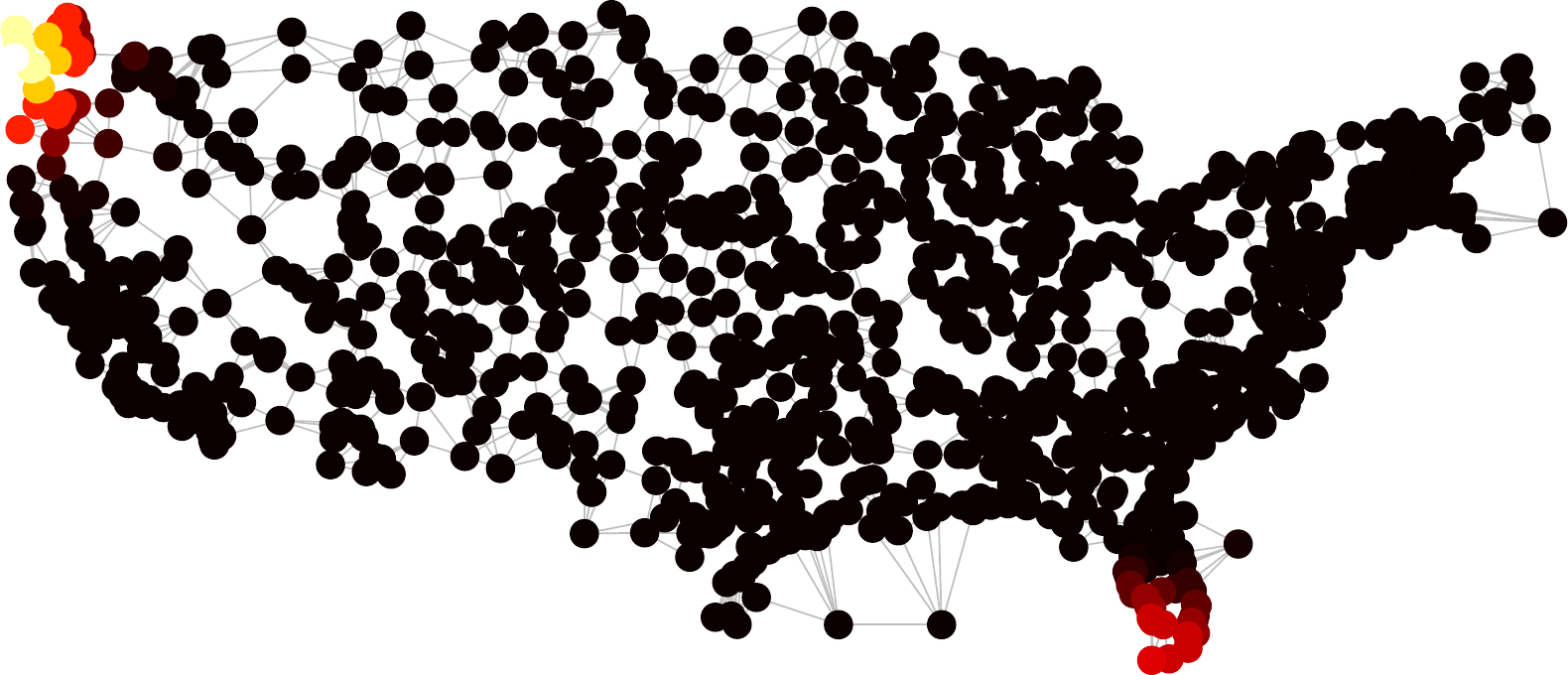}&
\includegraphics[width=0.2\textwidth]{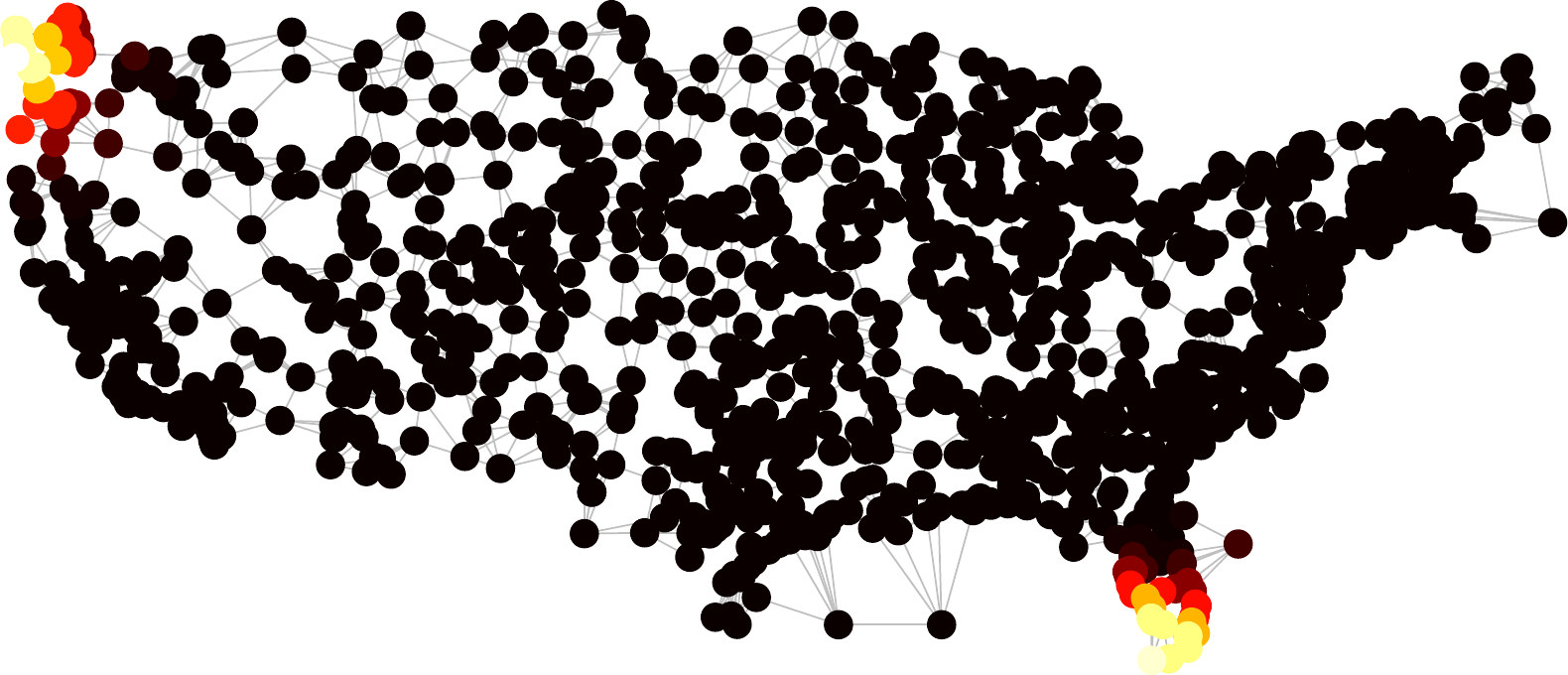}&
\includegraphics[width=0.2\textwidth]{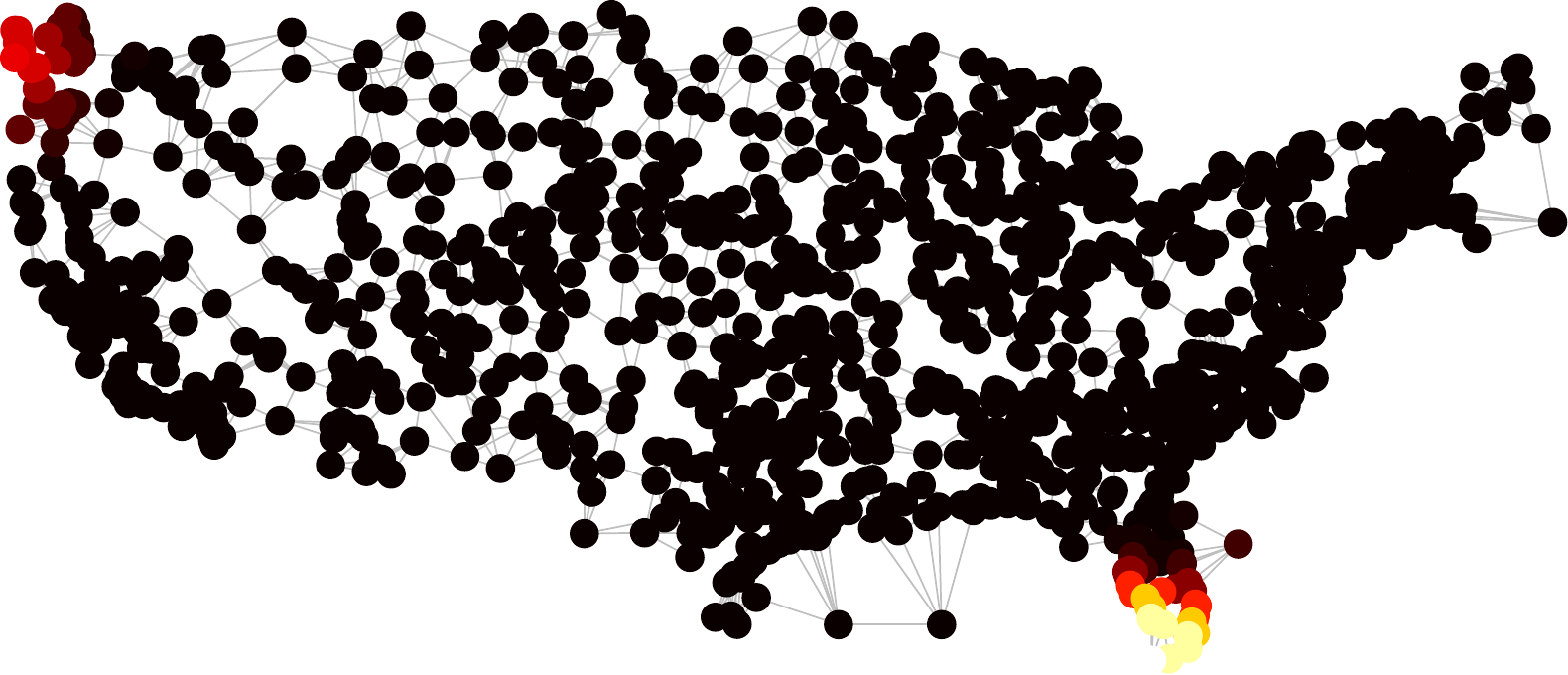}&
\includegraphics[width=0.2\textwidth]{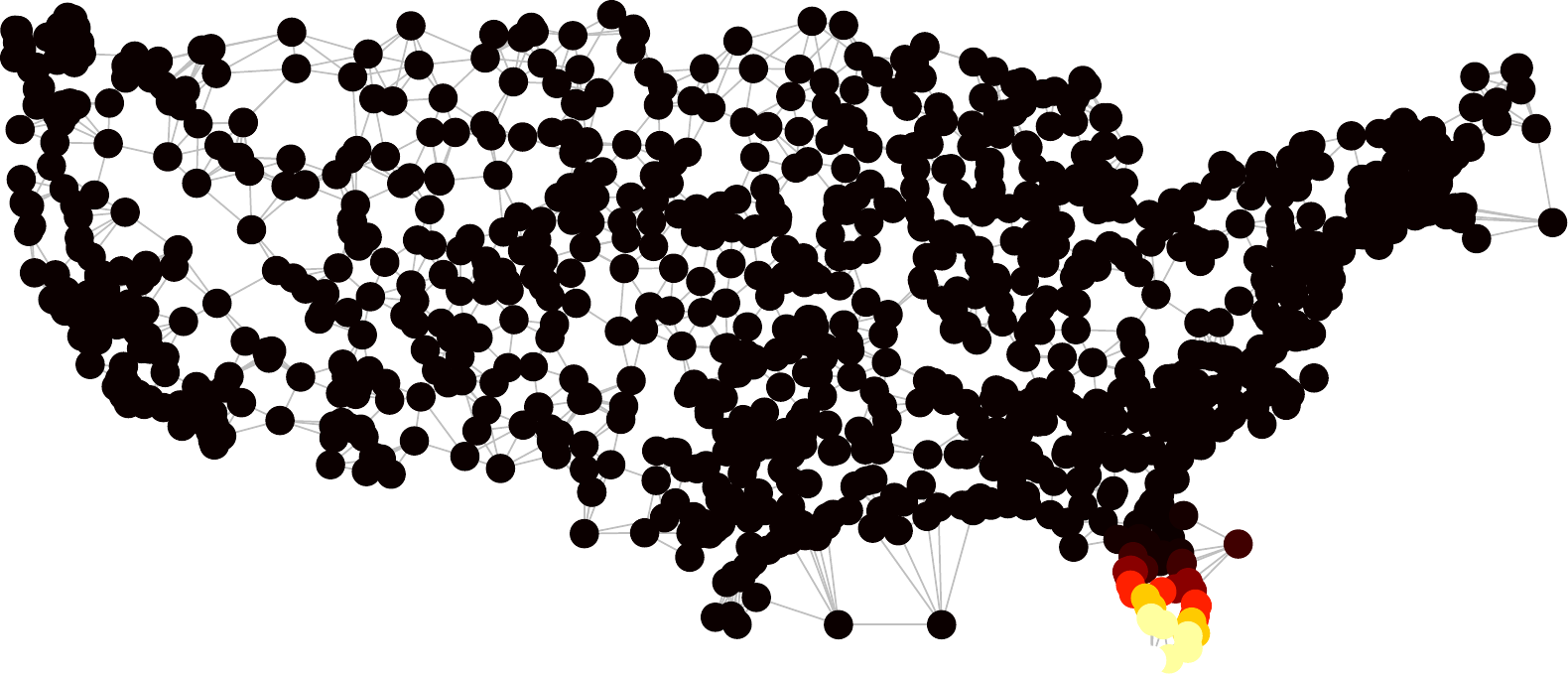} \\
\rotatebox{90}{\tiny \hspace{.15in}$\overline\W$}&
\includegraphics[width=0.2\textwidth]{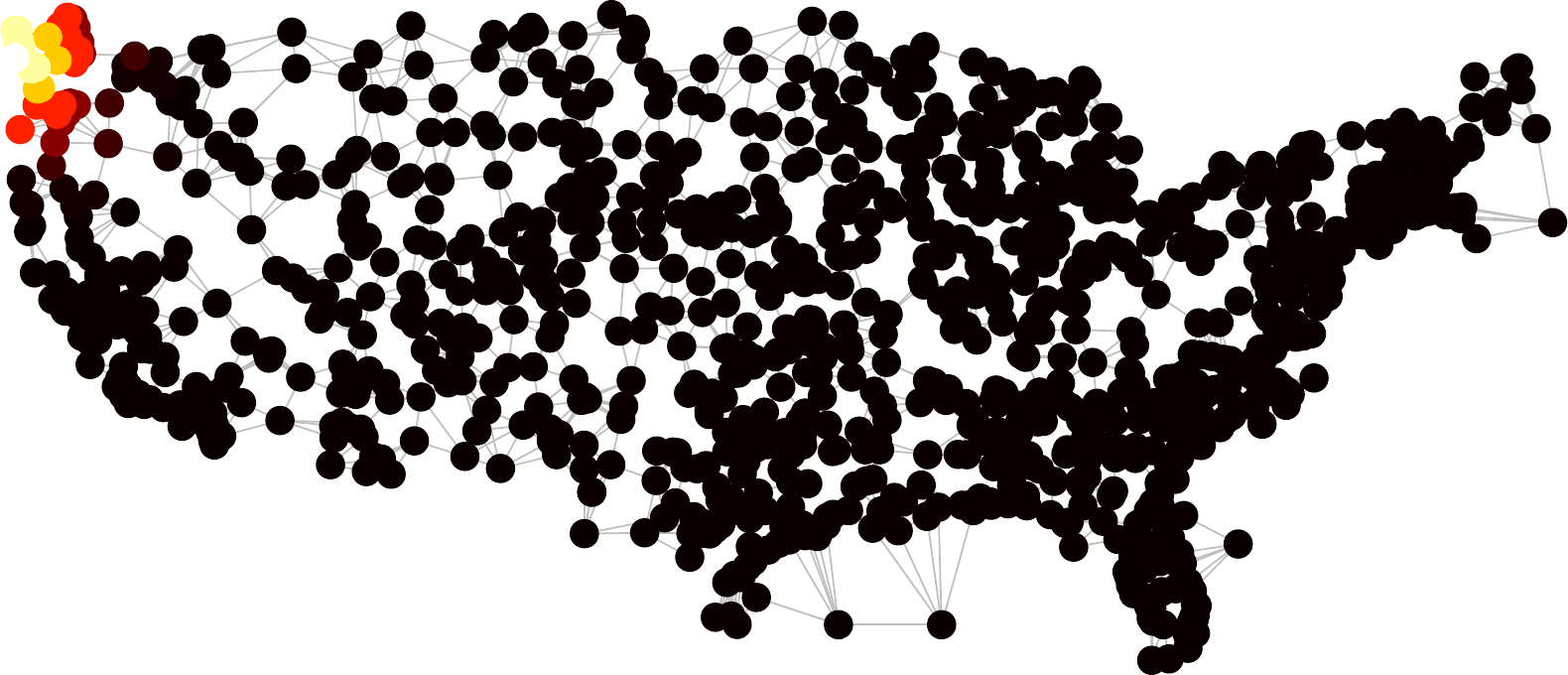}&
\includegraphics[width=0.2\textwidth]{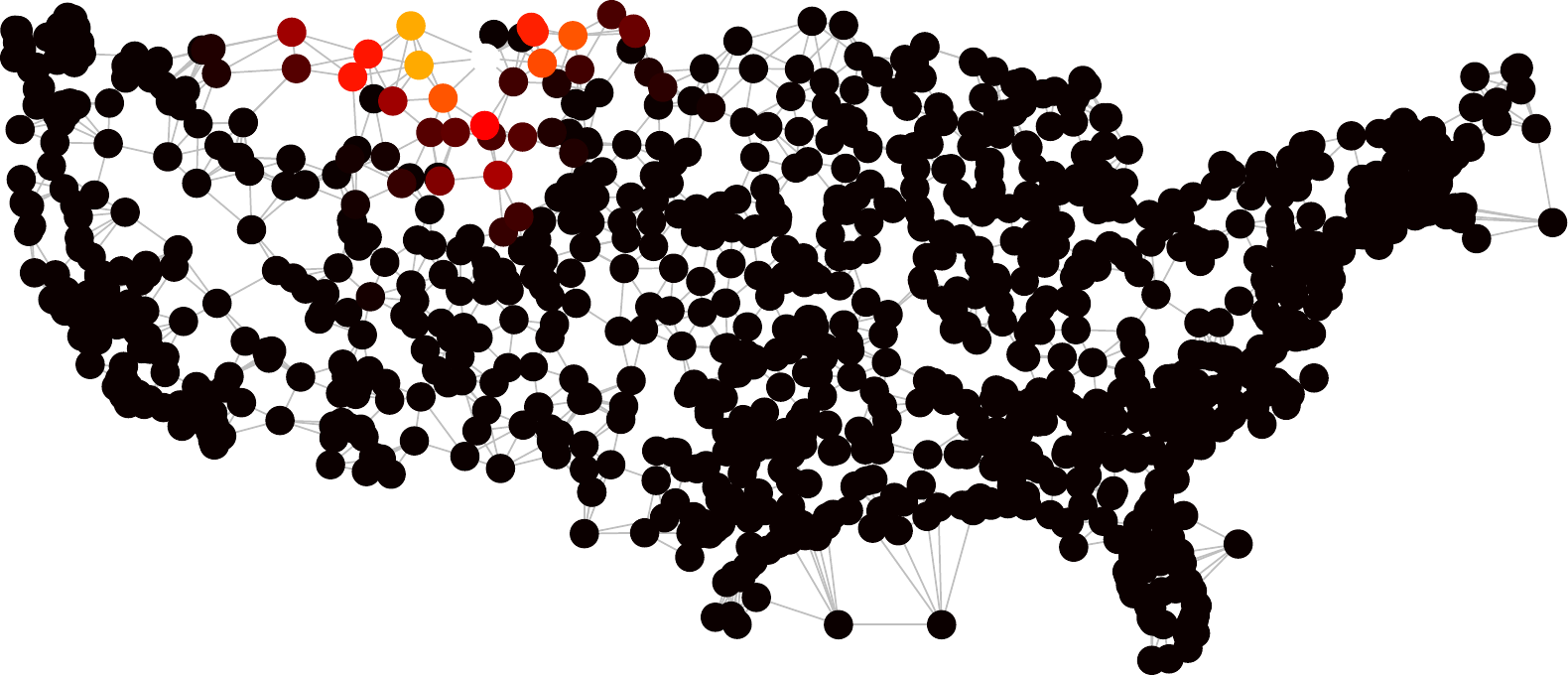}&
\includegraphics[width=0.2\textwidth]{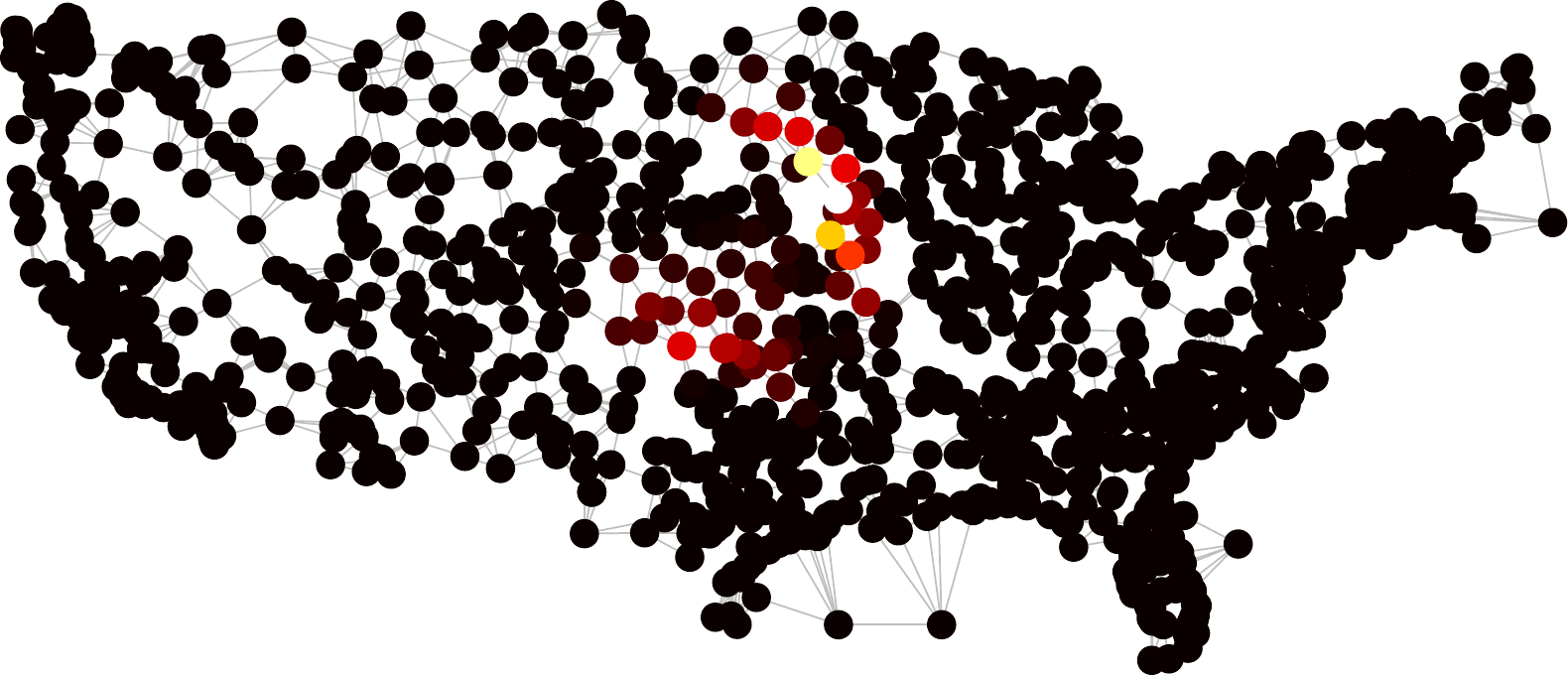}&
\includegraphics[width=0.2\textwidth]{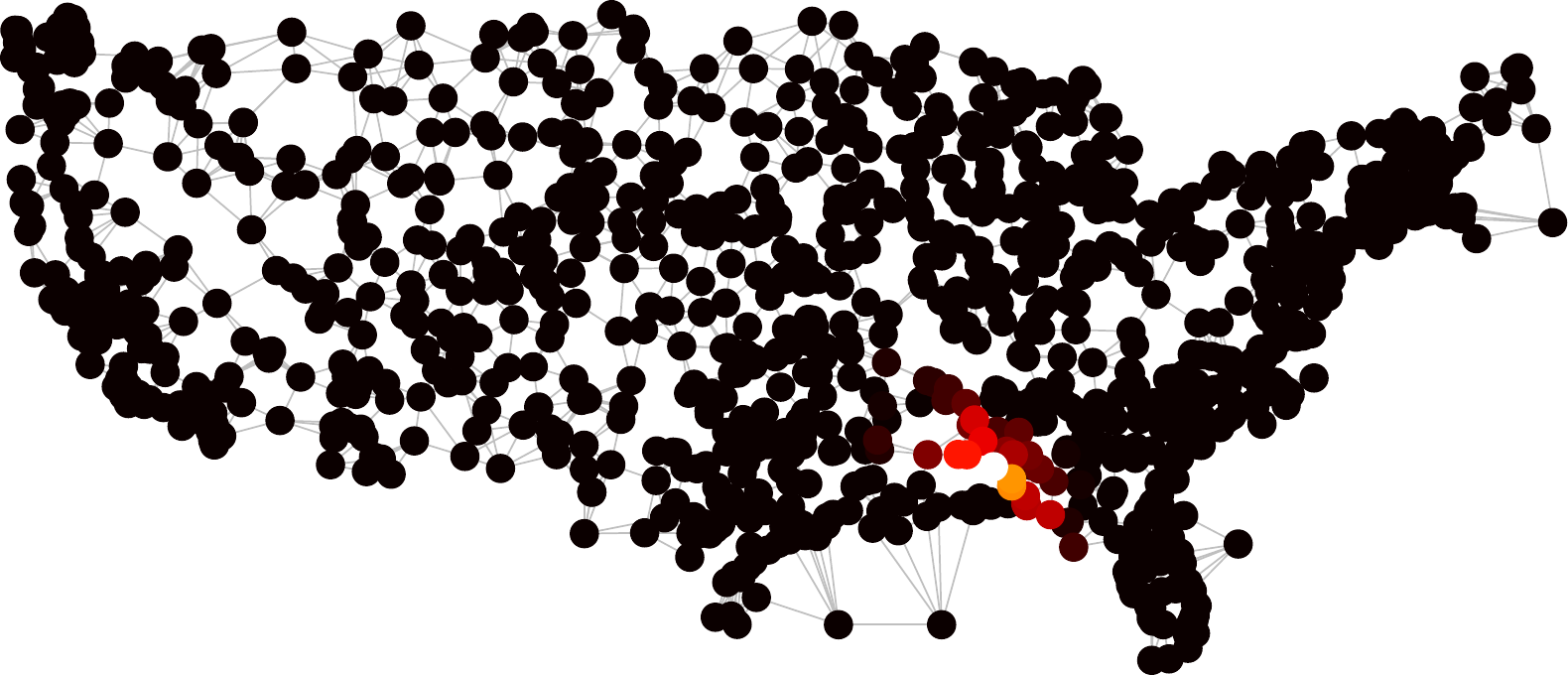}&
\includegraphics[width=0.2\textwidth]{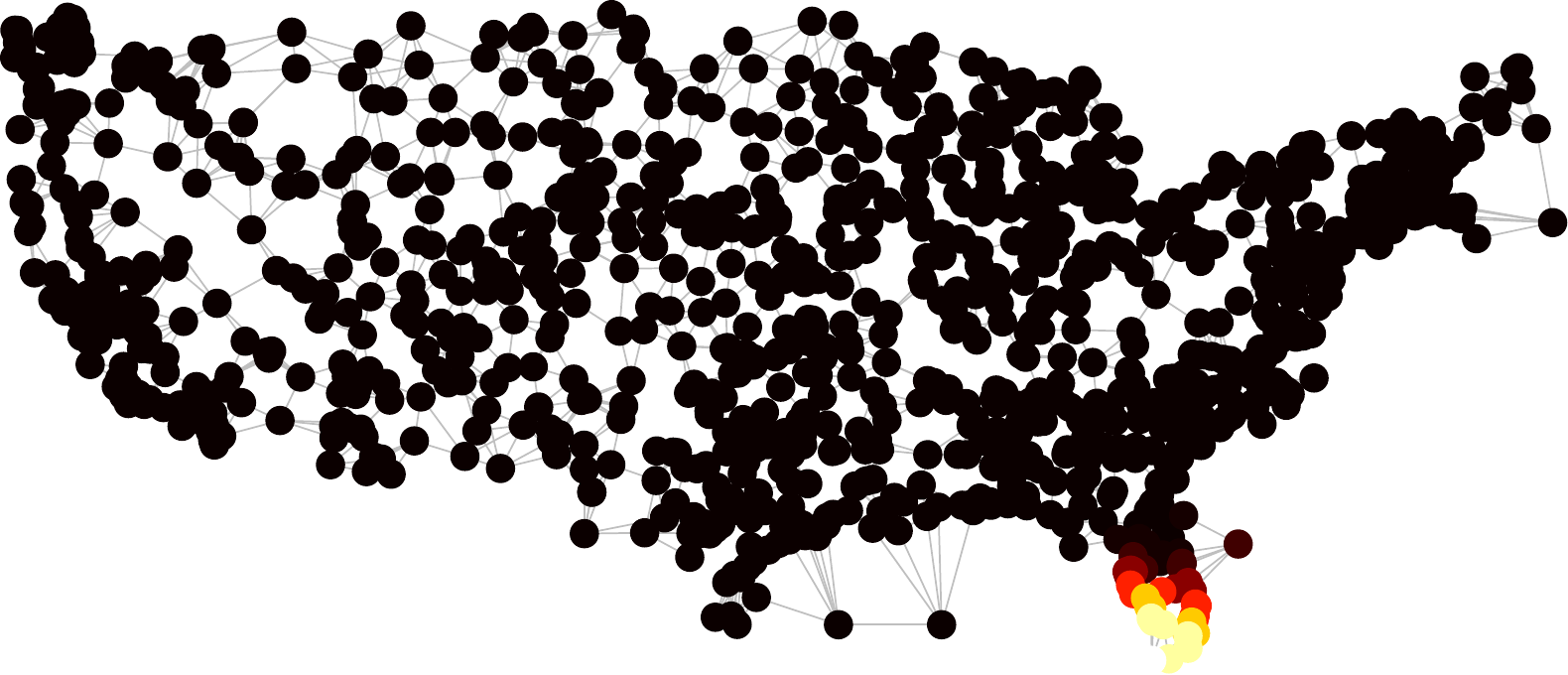} \\
& $t=0$ & $t=\frac{1}{4}$ & $t=\frac{1}{2}$ & $t=\frac{3}{4}$ & $t=1$
\end{tabular}
\vspace{-.1in}
\caption{Displacement interpolation on a geometric graph.  The one-Wasserstein distance $\W_1$ is explained by the trivial path $p(t,v)=(1-t)p_0+tp_1$ (top), while $\overline\W$ generates displacement interpolation phenomena in which the distribution moves across the map as a function of $t$.}\label{fig:displacement_interpolation}
\end{figure}

Figure~\ref{fig:displacement_interpolation} illustrates displacement interpolation computed between two distributions over a geometric graph of the United States ($|V|=1113, |E|=4058, k=50$).  Whereas the trivial interpolation ``teleports'' mass from one distribution to the other, the time-varying sequence of distributions shows mass moving continuously along the domain.

\subsection{Comparison with $\W_1$}

\begin{figure}\centering
\begin{tabular}{c|ccc}
\includegraphics[width=0.22\textwidth]{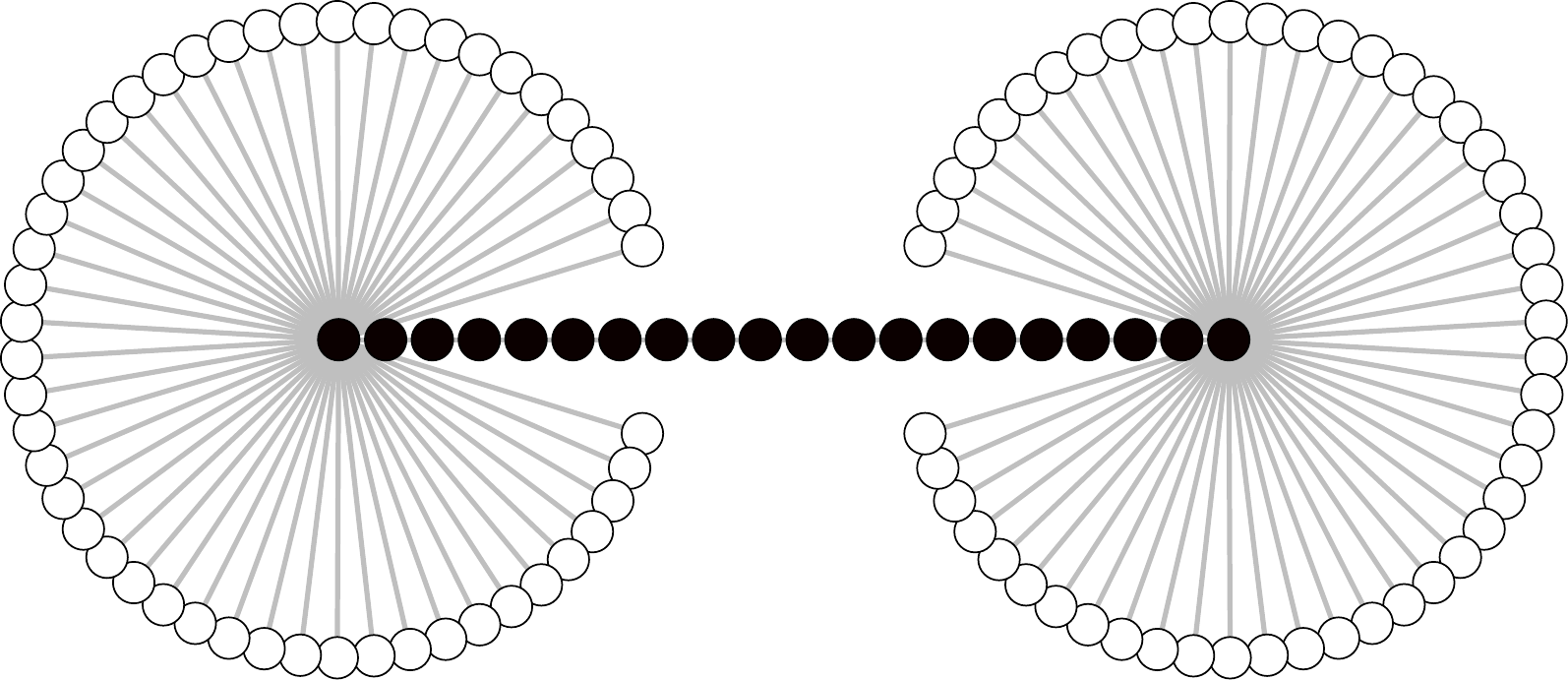}&
\includegraphics[width=0.22\textwidth]{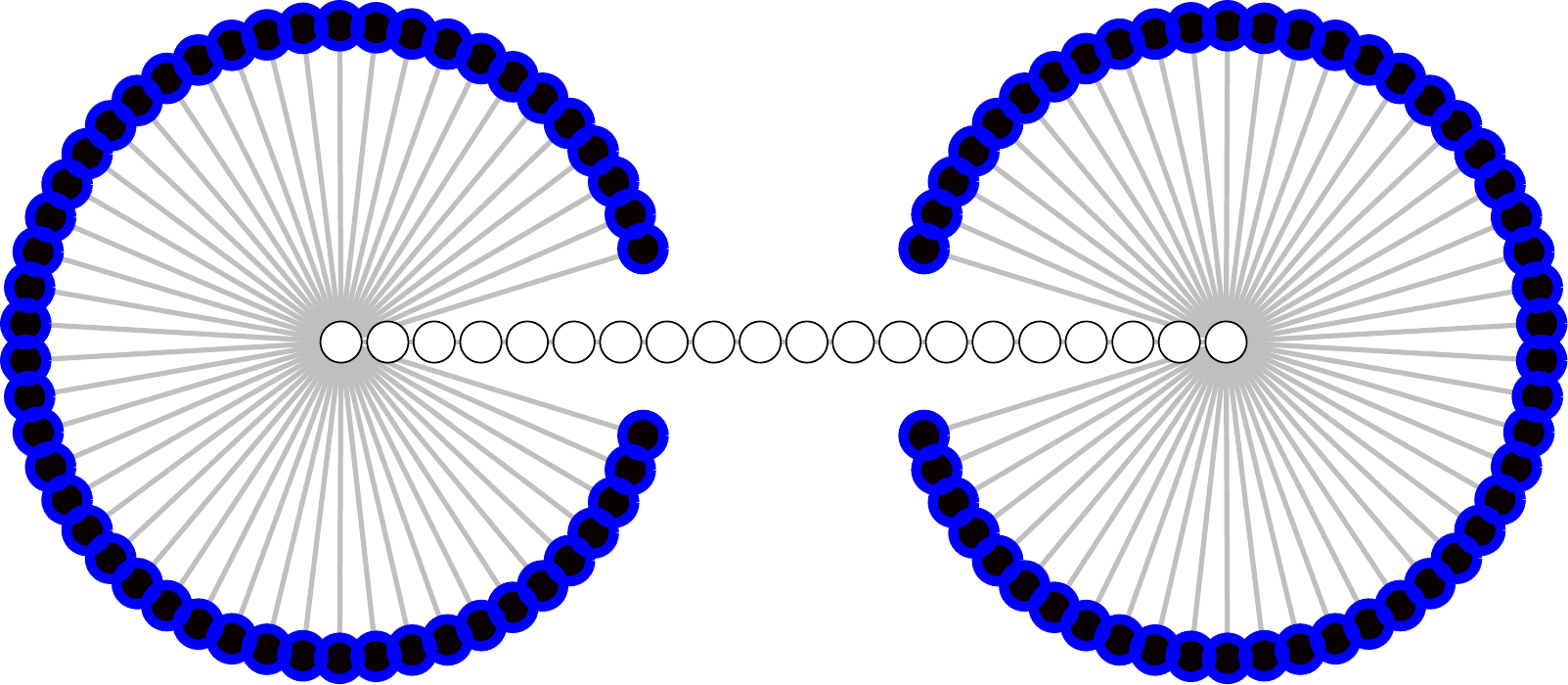}&
\includegraphics[width=0.22\textwidth]{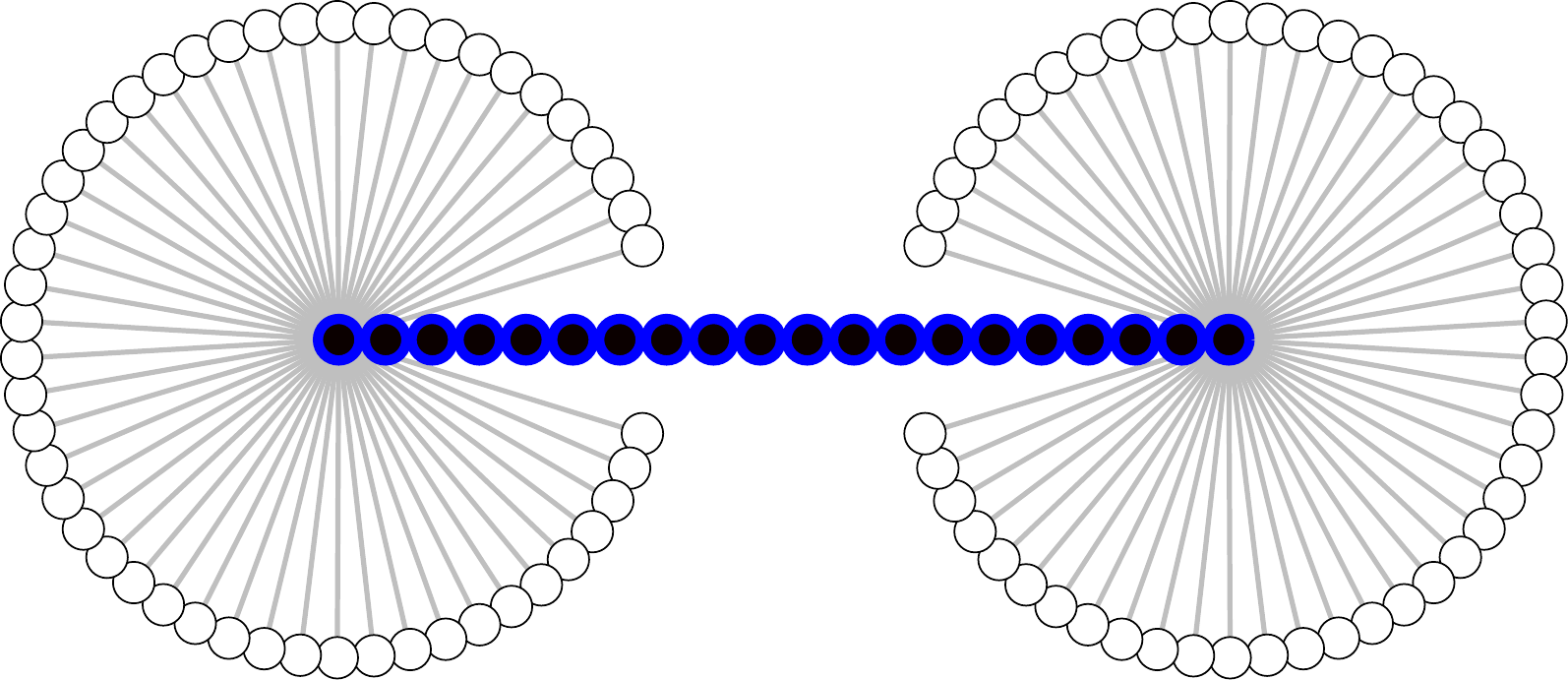}&
\includegraphics[width=0.22\textwidth]{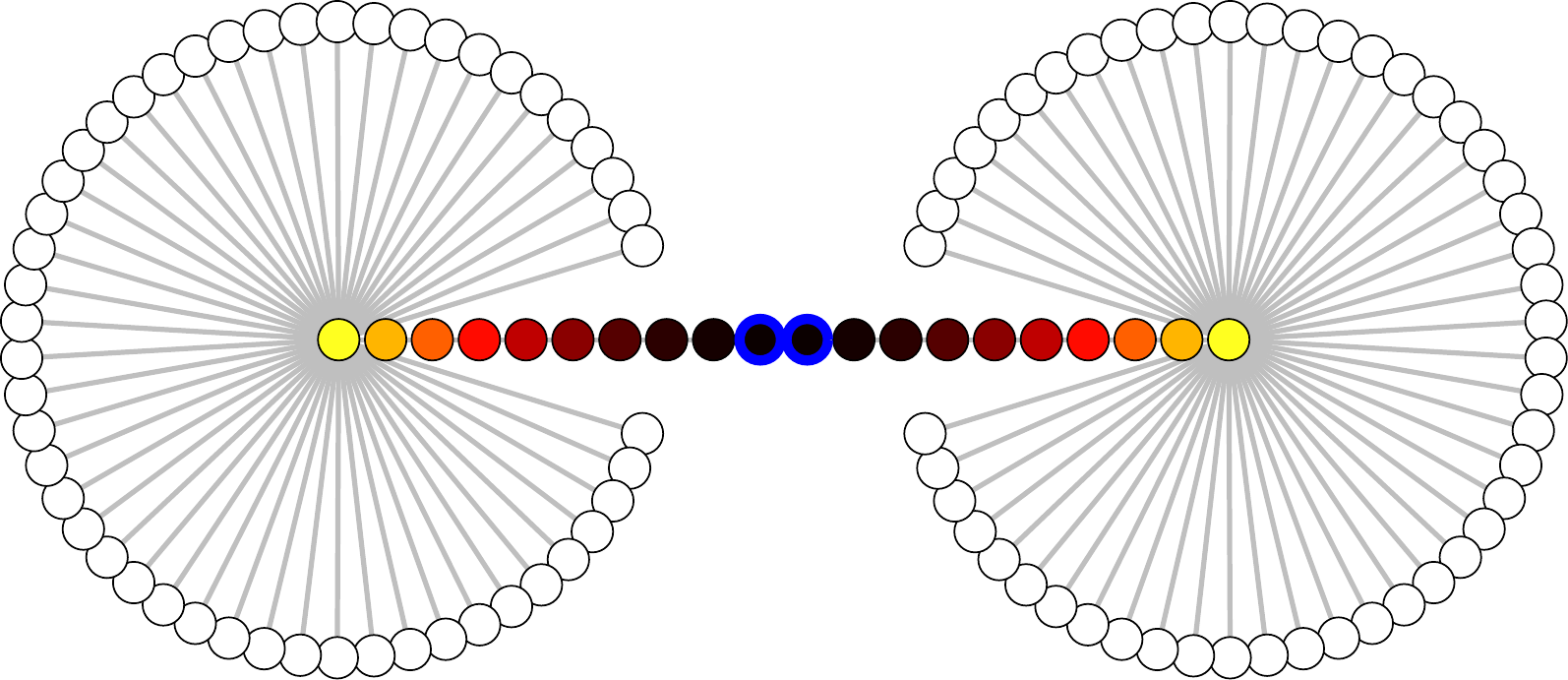}\\
\includegraphics[width=0.22\textwidth]{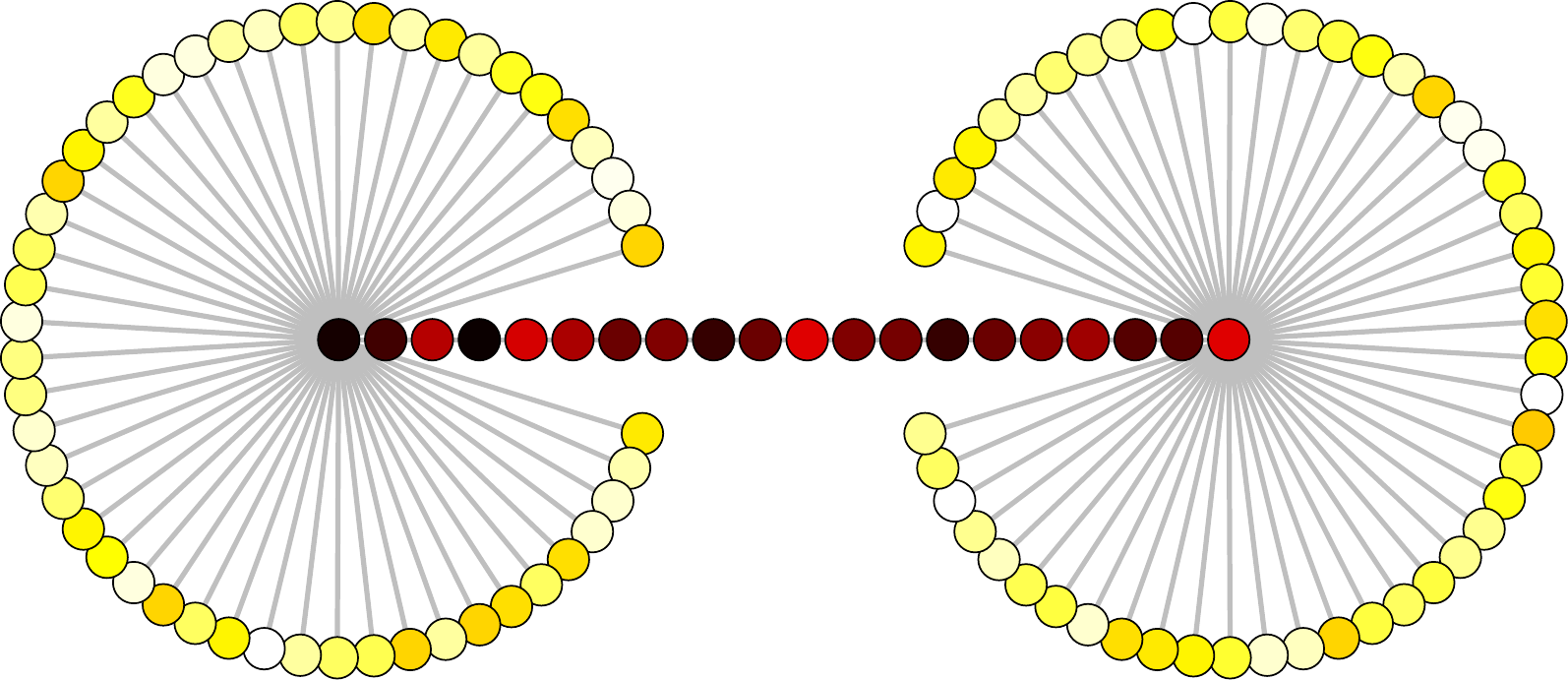}&
\includegraphics[width=0.22\textwidth]{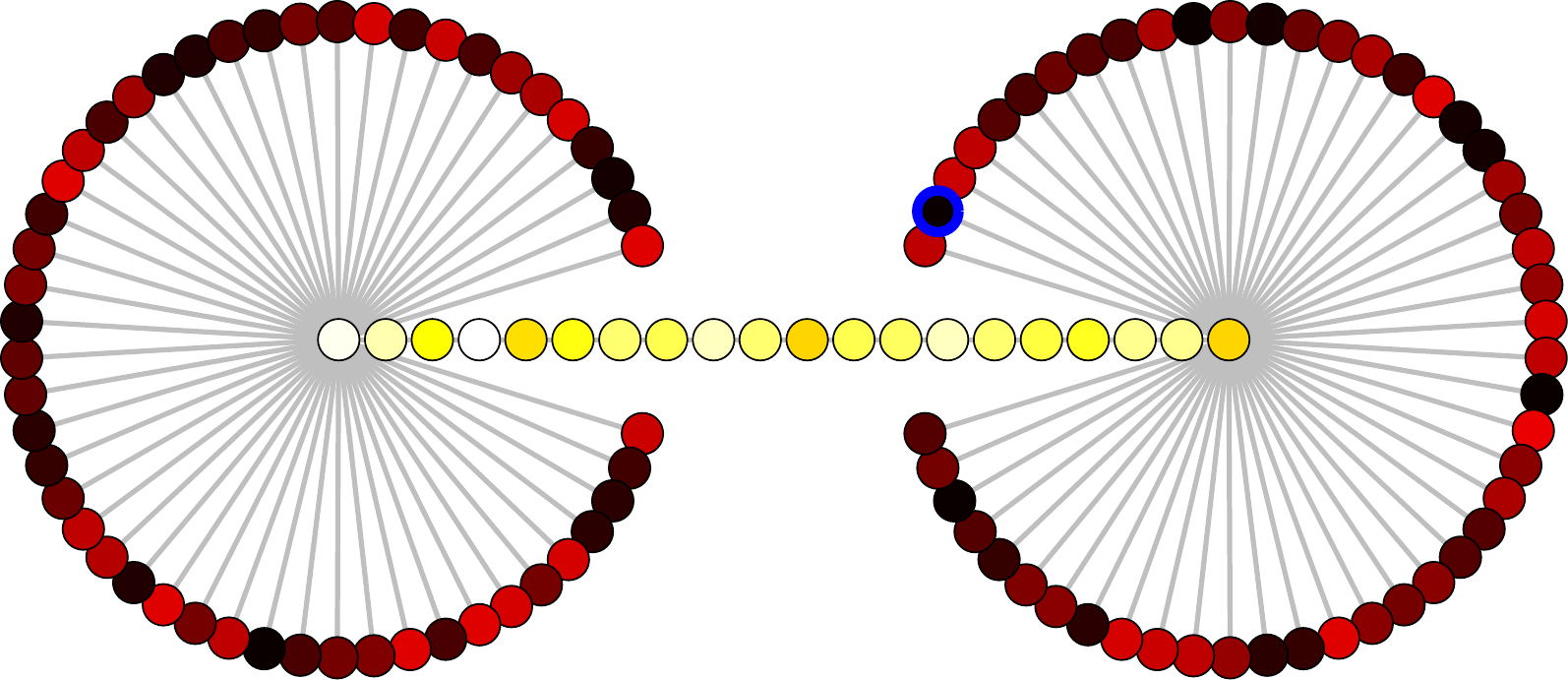}&
\includegraphics[width=0.22\textwidth]{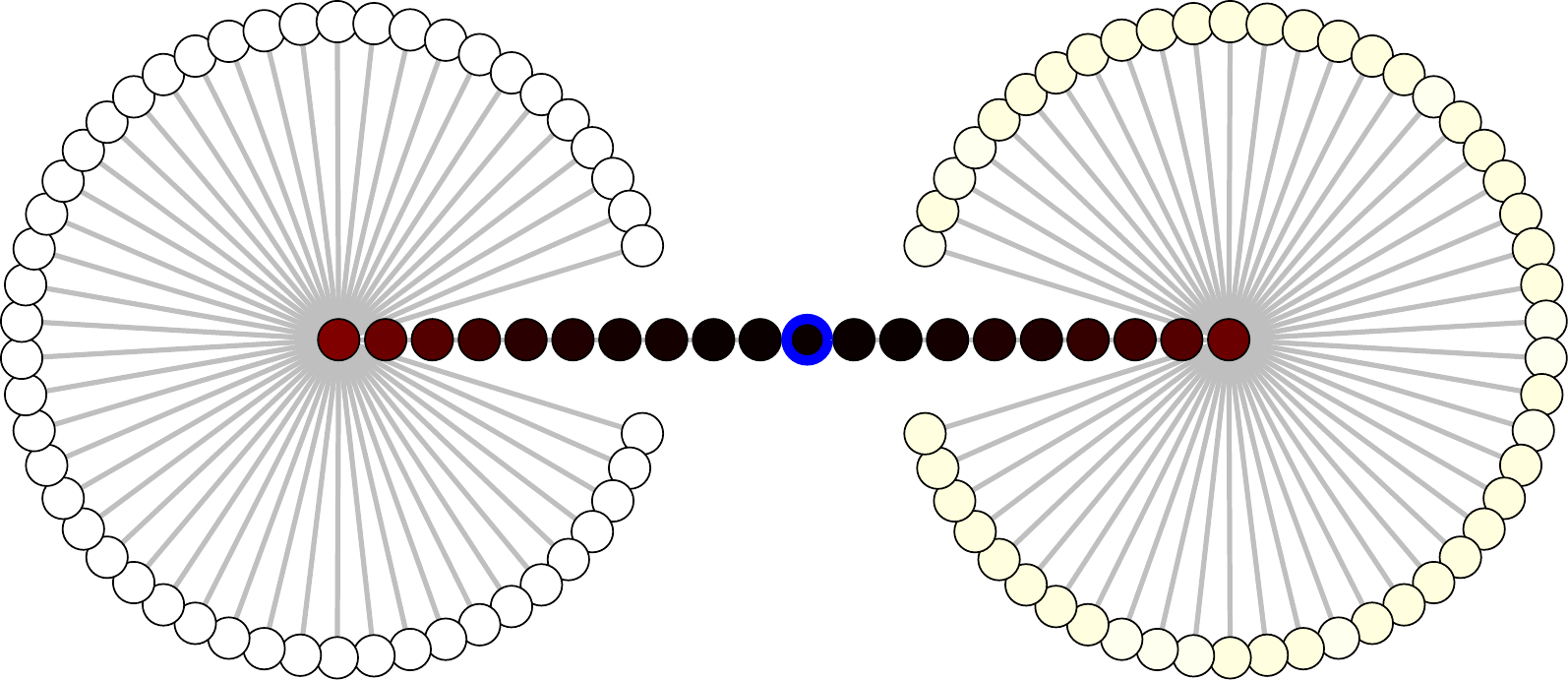}&
\includegraphics[width=0.22\textwidth]{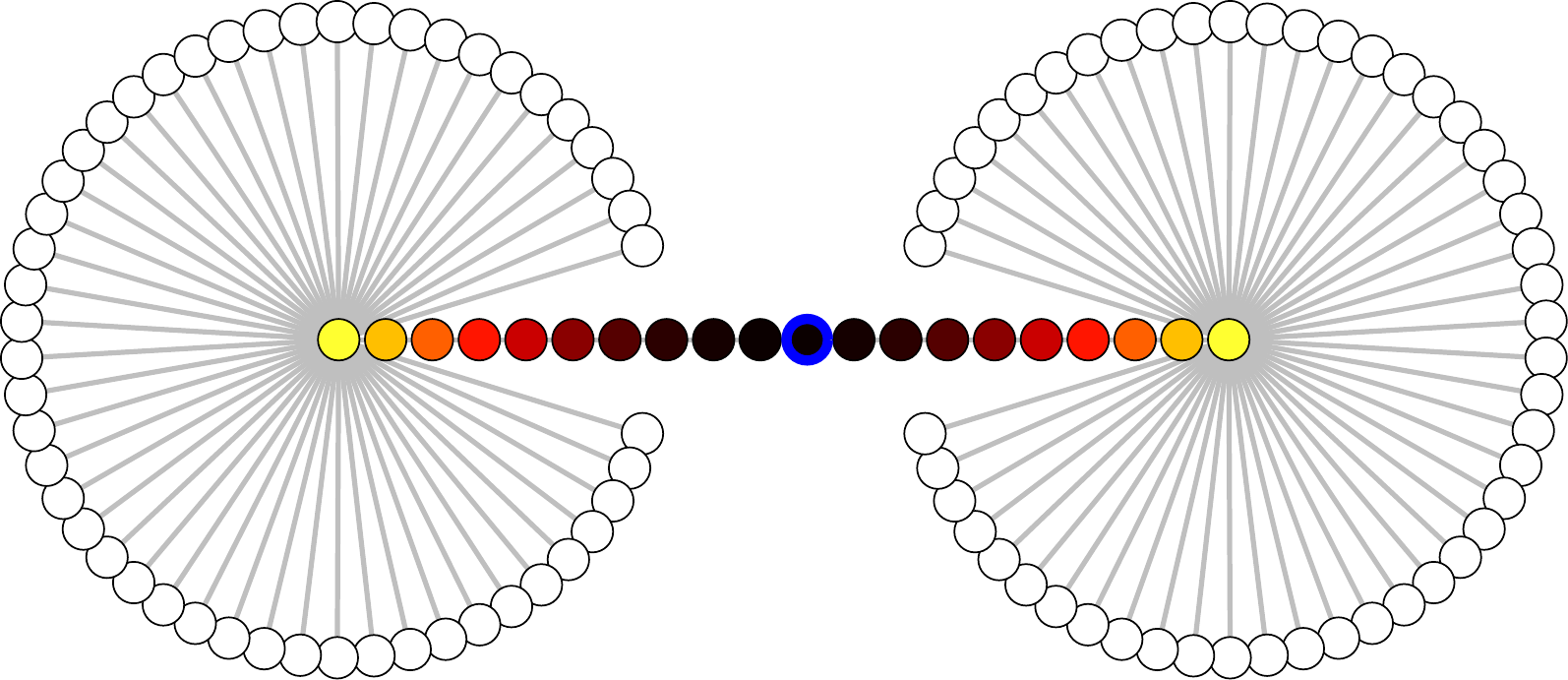}\\
\includegraphics[width=0.22\textwidth]{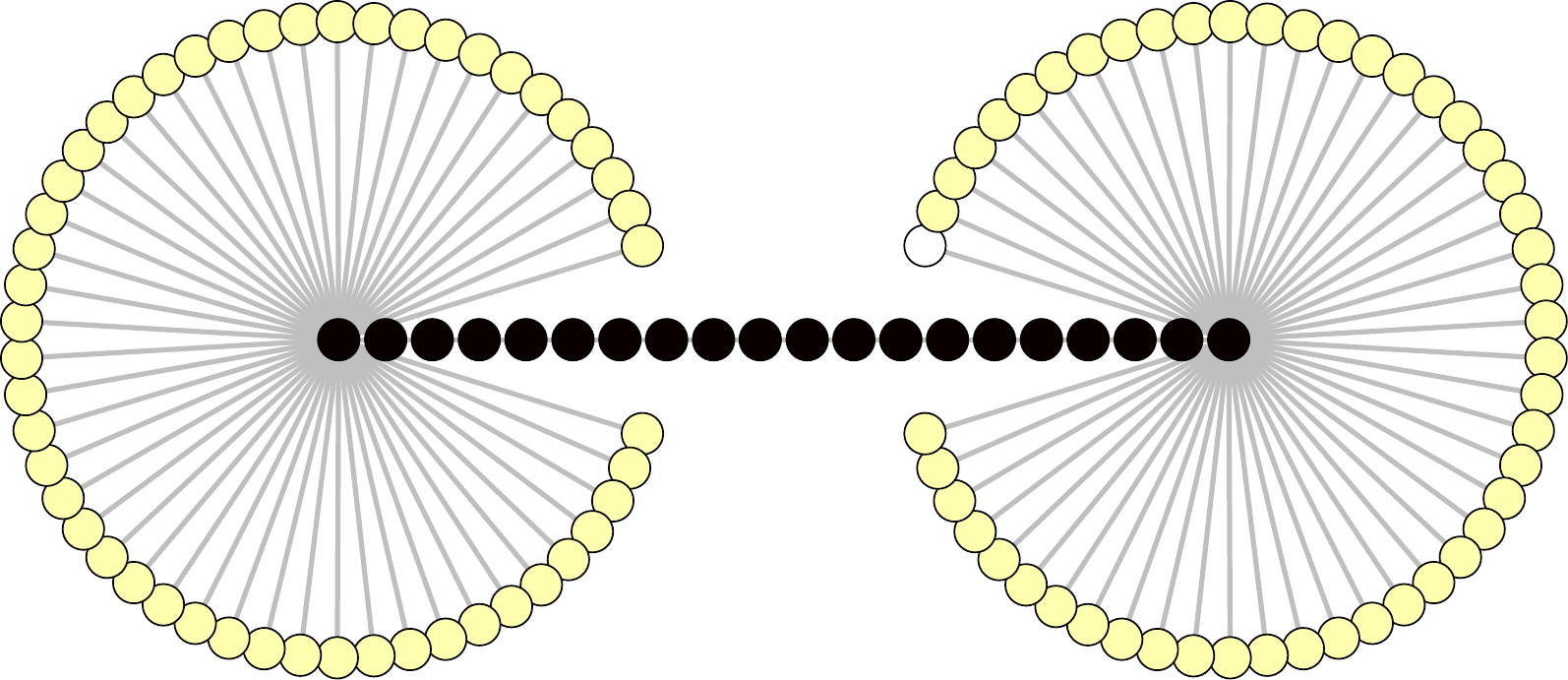}&
\includegraphics[width=0.22\textwidth]{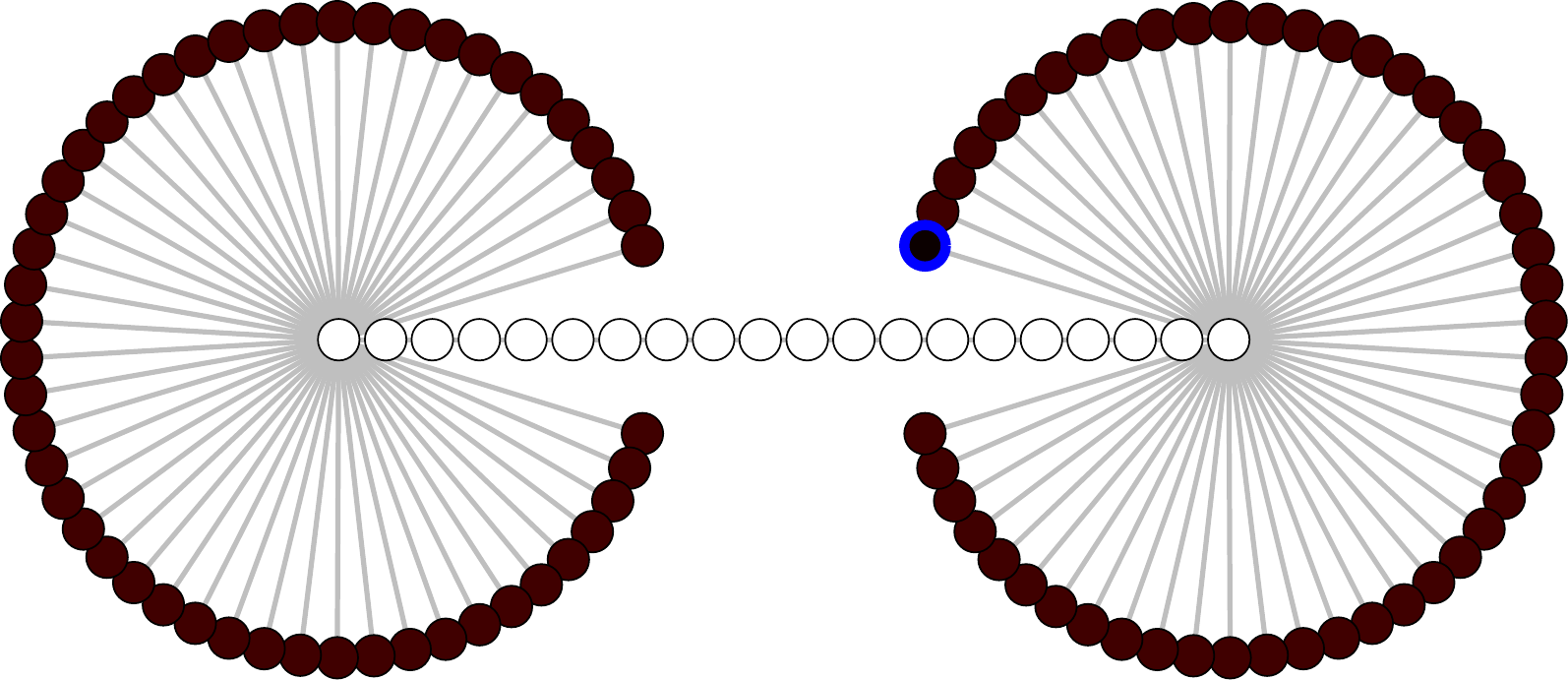}&
\includegraphics[width=0.22\textwidth]{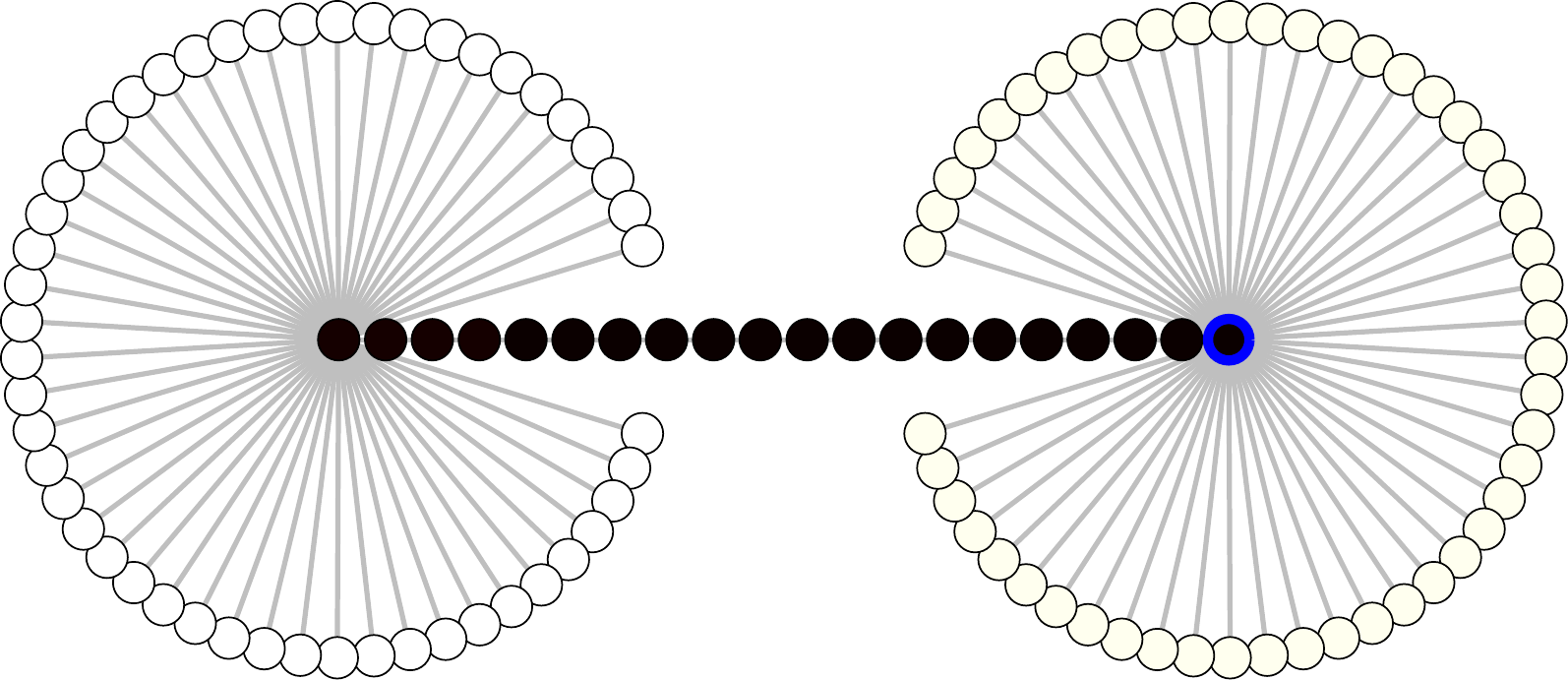}&
\includegraphics[width=0.22\textwidth]{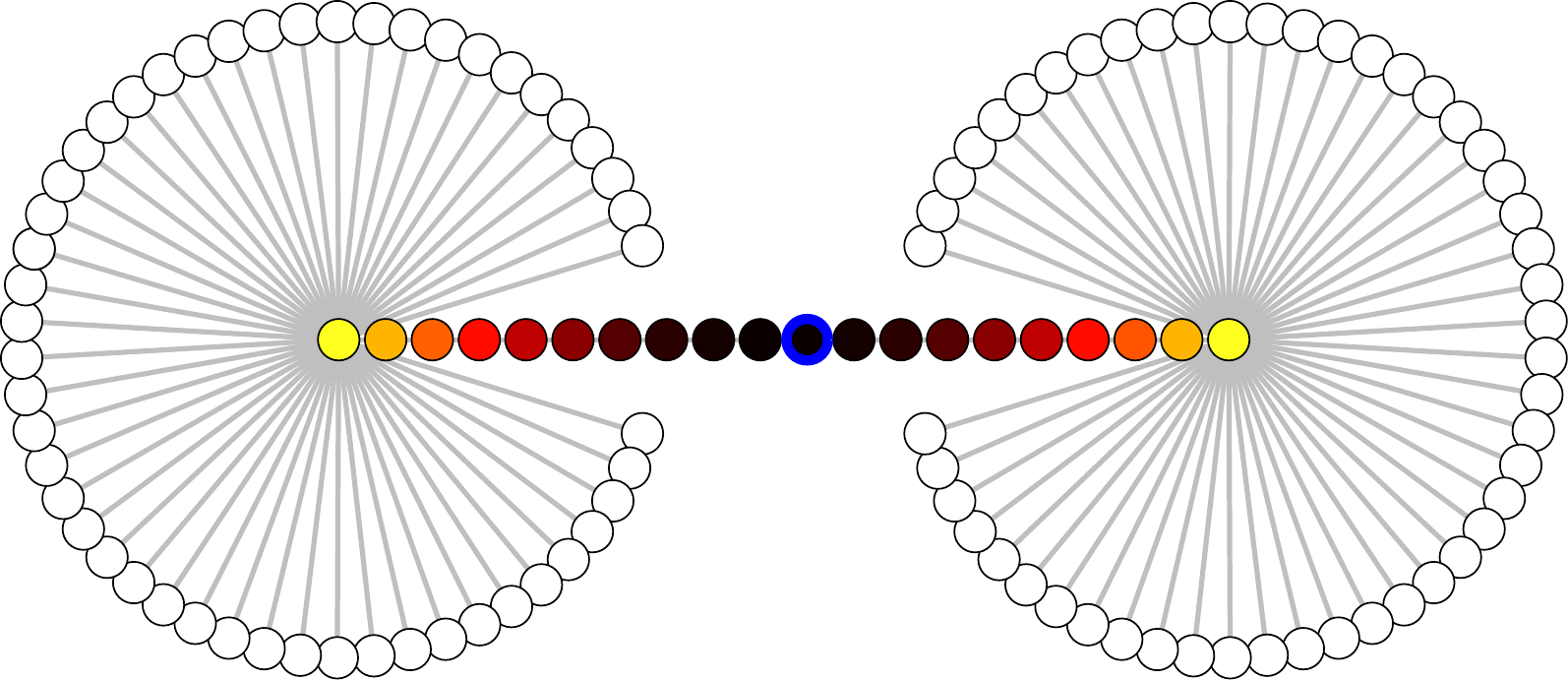}\\
$p(v)$ & $\|p-\delta_v\|_1$ & $\W_1(p,\delta_v)$ & $\overline\W_{20}(p,\delta_v)$
\end{tabular}
\vspace{-.1in}
\caption{Approximating $p$ (left) with the indicator $\delta_v$ of a single vertex $v\in V$.  The distribution $p(v)$ for three experiments is on the left, colored from black ($p(v)=0$) to white ($p(v)=1$); probability is concentrated on the circles uniformly (top), with noise (middle), or with a perturbation on a single vertex (bottom).  Vertices $v$ on the right are colored by $d(p,\delta_v)$ for distance $d$ on the bottom; all $v\in V$ minimizing $d(p,\delta_v)$ are marked in blue.}\label{fig:w1}\vspace{-.2in}
\end{figure}

When the ground distance is the length of the shortest path between vertices, earth mover's distances on a graph are easily computed using multi-commodity network flow; see~\cite{hitchcock-1941} for an early reference.  In particular, for $p,q\in\Prob(G)$, $\W_1(p,q)$ is the minimum of $\sum_{e\in E} |J(e)|$ such that $D^\top J=q-p.$  Hence, we should justify the expense of minimizing~\eqref{eq:discretization}, even if it is cheaper than the na\"ive quadratic distance $\overline\W_{\mathrm{full}}$ in~\eqref{eq:pairwise}.  As evidence beyond that in \S\ref{sec:displacement_interpolation}, we provide an additional experiment here.

Figure~\ref{fig:w1} shows an example constructed to compare $\overline\W$ to $\W_1$ ($|V|=120, |E|=119, k=20$).  Suppose we wish to approximate a distribution $p\in\Prob(G)$ with the indicator $\delta_v$ of a single vertex $v\in V$ by minimizing a probabilistic distance.  Here, $G$ consists of a line of vertices connecting two radial fans, and $p$ is uniform on the fans with zero probability along the line.  $\W_1(p,\delta_v)$ cannot distinguish between the vertices $v$ on the line, and the resulting minimizer $v$ is unstable to perturbations of $p$.  The $L_1$ distance $\|p-\delta_v\|_1$ is nearly minimized by any $v$ with mass on it. $\overline\W$, however, stably chooses $v$ to be the center of the line even in the presence of noise, and the distance function $f(v)\equiv\overline\W(p,\delta_v)$ clearly distinguishes between the vertices of $G$.

Intuition for this test comes from comparing the least squares problem $\min_x (\|x-x_1\|_2^2+\|x-x_2\|_2^2)$ to the geometric median problem $\min_x (\|x-x_1\|_2+\|x-x_2\|_2)$ on $\R^n$.  The former is differentiable with minimizer $\nicefrac{1}{2}(x_1+x_2)$, while the latter is minimized by \emph{any} $x$ on the segment from $x_1$ to $x_2$.

\begin{wraptable}{r}{.47\textwidth}\centering
\vspace{-.15in}
\begin{tabular}{|>{\centering\arraybackslash}m{.3in}|>{\centering\arraybackslash}m{.6in}|>{\centering\arraybackslash}m{.6in}|}\hline
\multirow{2}{*}{$\bm n$} & \multicolumn{2}{c|}{\textbf{\% recovered}}\\\cline{2-3}
& \vspace{.02in}$\bm{\|\cdot\|_1}$ & \vspace{.02in}$\bm{\overline\W}$\\\hhline{|=|=|=|}
\vspace{.02in}10 & \vspace{.02in}64.4\% & \vspace{.02in}64.1\%\\
20 & 73.5\% & 75.7\%\\
30 & 79.4\% & 84.0\%\\
40 & 86.5\% & 89.4\%\\
50 & 89.9\% & 91.9\%\\\hline
\end{tabular}
\caption{Results of shape retrieval task showing the average percentage of shapes in the same category as the query retrieved within the first $n$ results (109 experiments).}\vspace{-.2in}\label{fig:bow}
\end{wraptable}

\subsection{Bag-of-Words Shape Retrieval}

One application of distributions with ground distances is the processing of ``bag-of-words'' models.  Suppose we have a collection of objects with local descriptors, e.g.\ images with per-pixel features.  We can cluster all descriptors from all the domains to yield a sampling of descriptor space; then, each object in the collection can be described as the distribution of how many of its local descriptors have each cluster center as its closest point.  This strategy provides an effective, simple approach to clustering, search, and other problems~\cite{sivic-2005}.

An underlying issue with this model, however, is that the histogram bins are related, especially if the number of clusters is large.  That is, local features may have multiple near-closest cluster centers and could be assigned to different bins equally well.  $L_p$ distances between bag-of-words neglect this structure and hence \emph{degrade} as the number of cluster centers---and correspondingly the sampling of descriptor space---increases beyond a certain point.  Thus, transportation distances may have better performance for such tasks, as proposed in~\cite{jiang-2008,marinai-2011}.

Using extrinsic distance in descriptor space as the transportation ground distance, however, neglects the fact that the set of descriptors may exhibit manifold structure along which distances should be measured intrinsically.  For this reason, rather than measuring Euclidean distance between the cluster centers for the ground distance, we propose using transportation along a graph in which the clusters are linked locally.

Table~\ref{fig:bow} illustrates an experiment for three-dimensional shape retrieval on a database of 109 organic three-dimensional shapes from~\cite{giorgi-2007} classified into 10 categories.  We describe each shape using the bag-of-words method above on wave kernel signature (WKS) descriptors~\cite{aubry-2011} grouped into 1000 clusters; we connect the cluster centers into a connected graph by augmenting the Euclidean minimum spanning tree in WKS space with each center's two nearest neighbors ($|V|=1000, |E|=1668, k=10$).  Compared to using the $L_1$ norm $\|p-q\|_1$ to compare $p,q\in\Prob(G)$, $\overline\W$ consistently achieves better within-class retrieval results using identical descriptors, beyond an initial set of shape matches well-distinguished by any distance.

\section{Discussion and Conclusion}

Our transportation distance $\overline\W$ bridges the gap between continuous and discrete transportation.  Computational techniques on discrete domains like graphs largely are limited to the \emph{one}-Wasserstein distance $\W_1$, which relates to maximum flow problems, but the use of non-squared ground distances creates non-uniqueness and other undesirable properties.  Contrastingly, theoretical understanding of Wasserstein distances largely is limited to the \emph{two}-Wasserstein distance $\W_2$.  Our distance $\overline\W$, however, has properties in common with $\W_2$ without a construction scaling quadratically in $|V|$.

The construction of $\overline\W$ provides a new avenue of research into the theory and practice of transportation distances on graphs.  Most immediately, replacing our use of generic optimization tools with ADMM~\cite{boyd-2011,papadakis-2014} or another specialized technique may help scale $\overline\W$ to graphs like social networks and websites with millions of nodes.  Alternatively, since $D^\top D$ is the Laplacian matrix of $G$, it may be possible to apply spectral techniques to approximate minima of~\eqref{eq:graph_geodesic} e.g.\ by decomposing $J=Da + B$ where $D^\top B=0$~\cite{chung-1994,jiang-2011}.  More generally, our variational construction of $\overline\W$ by introducing a continuous time variable $t$ still led to a distance function over the finite-dimensional space $\Prob(G)$; this pattern may reveal a more general strategy for problems on graphs with more easily-understood continuous analogs.

While our discussion has circulated largely around the construction of $\overline\W$ and some motivating examples, we foresee many practical applications of our machinery and larger optimizations in which it can play a key role.  For instance, the distribution minimizing the sum of squared two-Wasserstein distances yields an effective strategy for computing a \emph{barycenter} summarizing a set of distributions~\cite{bruckstein-2012,cuturi-2013-2,bonneel-2014}; networks of such pairwise distances create effective strategies for semi-supervised learning~\cite{solomon-2014}.  Similar strategies have been applied to construct techniques for clustering~\cite{wagner-2011} and matrix factorization~\cite{sandler-2011}.  Optimal transportation also has had considerable influence on methods in vision~\cite{rubner-2000}, imaging~\cite{bonneel-2014}, graphics~\cite{solomon-2013}, shape processing~\cite{rabin-2010}, road networks~\cite{treleaven-2013}, and other application areas.  With our graph-based treatment we hope to extend these benefits to other domains like documents with geometric embedding structure~\cite{mikolov-2013}.



\small{
\bibliographystyle{abbrv}
\bibliography{graph_distance}
}

\end{document}